\newtheorem{thm}{Theorem}[section]
\newtheorem{lem}[thm]{Lemma}
\newtheorem{prop}[thm]{Proposition}
\newtheorem{cor}[thm]{Corollary}
\newtheorem{df}[thm]{Definition}
\newtheorem{rem}[thm]{Remark}
\newtheorem{ass}[thm]{Assumption}
\newcommand{\E}{\mathbb{E}}
\newcommand{\prob}{\mathbb{P}}
\newcommand{\R}{\mathbb{R}}
\newcommand{\N}{\mathbb{N}}
\newcommand{\hilbert}{\mathcal{H}}
\newcommand{\beq}{\begin{equation}}
\newcommand{\eeq}{\end{equation}}
\newcommand{\bea}{\begin{aligned}}
\newcommand{\eea}{\end{aligned}}
\newcommand{\bdm}{\begin{displaymath}}
\newcommand{\edm}{\end{displaymath}}
\newcommand{\barr}{\begin{array}}
\newcommand{\earr}{\end{array}}
\newcommand{\ben}{\begin{enumerate}}
\newcommand{\een}{\end{enumerate}}
\newcommand{\bde}{\begin{description}}
\newcommand{\ede}{\end{description}}
\newcommand{\defi}{\stackrel{\text{def}}{=}}
\newcommand{\be}{\beta}
\newcommand{\de}{\delta}
\newcommand{\s}{\sigma}
\def\be{\begin{equation} }
\def\ee{\end{equation} }
\begin{document}

\noindent

\bigskip
\begin{center}
{\large \bf SMALL PERTURBATIONS \\ OF A SPIN GLASS SYSTEM}
\end{center}
\vspace{0.5cm}

\begin{center}
Louis-Pierre Arguin\footnote[1]{Supported by the NSF grant DMS-0604869.}  \\
\vspace{0.5cm}
Courant Institute of Mathematical Sciences, New York University,\\
251 Mercer St., NY 10012, USA.\\
\vspace{1cm}

Nicola Kistler\footnote[2]{Supported by the Deutsche Forschungsgemeinschaft, no. DFG GZ BO 962/5-3.}\\
\vspace{0.5cm}
Institute of Applied Mathematics, University of Bonn,\\ 
Wegelerstr. 6, DE-53115 Bonn, Germany. 

\end{center}

\bigskip
\begin{abstract}
We show through a simple example that perturbations of the Hamiltonian of a spin glass which cannot be detected at the level of the free energy can completely alter the behavior of the overlap. In particular, perturbations of order $O(\log N)$, with $N\to \infty$ the size of the system, suffice to have ultrametricity emerge in the thermodynamical limit. 
\end{abstract}

\vfill

\newpage 
${}$ \\

\section{Introduction}

By virtue of the seminal works of Guerra \cite{guerra} and Talagrand \cite{talagrand}, the limiting free energy of models of Sherrington-Kirkpatrick(SK)-type is now known to be given by the Parisi Formula. 
However, the purported ultrametric organization of the Gibbs states \cite{parisi}
remains poorly understood.

One piece of evidence for ultrametricity in the SK-type models is obtained through the cavity-dynamics framework of Aizenman, Sims and Starr \cite{aizenman_sims_starr}: one easily checks that the Parisi Formula is obtained when the AS$^2$-functional is evaluated in the Derrida-Ruelle Random Overlap Structures (the {\it ROSt's}), \cite{ruelle}. As these are prototypes of ultrametric structures, the ultrametricity seems very plausible. This however clearly does not imply that the Gibbs measure itself is ultrametric. In \cite{bokis} it is proved that there do exist models whose free energy coincides with that of a hierarchical model but with non-ultrametric Gibbs measure (such models were called {\it non-irreducible}). 

Another piece of evidence in favor of ultrametricity stems from the extended Ghirlanda-Guerra identities (EGGI), especially in view of Panchenko's beautiful result \cite{panchenko}. There are however different problems with the EGGI in relation to ultrametricity. First, it is not known whether the EGGI hold for any temperature, but only ``on average'', cf.\cite{ghirlanda_guerra, talagrand}. Regarding this difficult issue, we have nothing to say. Second, the EGGI are typically obtained by adding small perturbations to the Hamiltonian which leave the free energy of the system unchanged. The fact that this is a somewhat risky endeavour was already clear to Parisi and Talagrand (and presumably to others) who point out in \cite[p. 3]{parisi_talagrand} that \\

{\it ``...to any Hamiltonian one can add a small perturbation term... such that the perturbed Hamiltonian satisfies the EGGI. The perturbation term is small in the sense that it does not change the limiting free energy. (Unfortunately, adding this term might change the structure of the overlap)''.\\}

In this note we address the issue of perturbed Hamiltonians with particular emphasis on ultrametricity. We consider REM-like systems such as those introduced in \cite{bokis} which are {\it not} ultrametric in the thermodynamical limit and show that ``small'' perturbations to the Hamiltonian suffice to have ultrametricity emerge; by this we understand perturbations whose variance is of order $\alpha\log N$, for $\alpha$ large enough and $N$ the size of the system. 

The use of {\it small perturbations} pervades the whole subject of spin glasses, having proved to be crucial in the derivation, e.g., of the Aizenman-Contucci equations \cite{aizenman_contucci}, of the Ghirlanda-Guerra equations \cite{ghirlanda_guerra} and their generalizations EGGI \cite{talagrand}. Usually based on sound stability considerations \cite{contucci_giardina}, small perturbations must however be taken with caution. Indeed, although it is to be expected from general statistical mechanics considerations that the structure of the Gibbs state can be affected by a small perturbation of the Hamiltonian, it is rather surprising that modifications of the order of the logarithm of the size of the system suffice to deeply alter the organization of the states.\footnote{This is to be compared for example to the random field Curie-Weiss model where perturbation of order $N^{1/2}$ are necessary to modify the measure \cite{AZP}.}

\section{General Setting}
Let us start by considering a general Gaussian spin glass system on $N$ spins. Precisely, we take a centered Gaussian process $X=(X_\sigma)_{\sigma\in\Sigma_N}$, $\Sigma_N:=\{-1,1\}^N$,  with covariance or {\it overlap }matrix $Q=N\{q_{\sigma\sigma'}\}$ and $q_{\sigma\sigma}=1$. At this point, we do not specify a form for the overlap matrix $Q$ besides the normalization of the diagonal (and hence no particular geometry of $\Sigma_N$). In our notation, the SK model corresponds to taking $q_{\s\s'}=\left(\frac{1}{N}\sum_{i=1}^N\s_i\s'_i\right)^2$. Throughout the paper we will write $\E$ for the expectation over the process $X$ and $\prob$ for its law. 

The Gibbs measure $\mathcal{G}_{\beta,N}$ on $\Sigma_N$ is defined as usual by
$$
\mathcal{G}_{\beta,N}(\sigma)=\frac{e^{\beta X_\sigma}}{Z_N(\beta)} \ , Z_N(\beta)=\sum_{\sigma\in\Sigma_N}e^{\beta X_\sigma} \ . 
$$
We write $\mathcal{G}_{\beta,N}^{\otimes s}$ for the product measure of $s$ copies of $\mathcal{G}_{\beta,N}$. The free energy is denoted by
$$ 
f_{N}(\beta)\defi \frac{1}{N}\log Z_N(\beta) \ .
$$
We will assume that the limit $N\to\infty$ exists, and that it coincides with the limit of $\E f_{N}(\beta)$ (self-averaging). 

It is useful for our purpose to make sense of the Gibbs measure in the thermodynamic limit $N\to\infty$ (see \cite{arguin} for details). To this aim, one considers the algebra of observables generated by functional of the form
\be
\mathcal{G}_{\beta,N}\mapsto\E\mathcal{G}_{\beta,N}^{\otimes s}\left(\prod_{i<j}^sq^{k_{ij}}_{\sigma_i\sigma_j}\right)
\label{eqn overlap fct}
\ee
for some $k_{ij}\in\N$. Replicas of configurations are denoted by $\sigma_i$. For each $N$, the collection of observables define the law of a weakly exchangeable overlap matrix $Q_{\beta,N}$, i.e., a random matrix whose law is invariant under permutations of rows and columns. 

Weakly exchangeable overlap matrices correspond to Gram matrices constructed by independently sampling vectors from a {\it directing measure} $\mu$ on some canonical Hilbert space $\hilbert$ \cite{ds}. In the above example, the directing measure of $Q_{\beta,N}$ is the Gibbs measure $\mathcal{G}_{\beta,N}$ and the inner product is simply the overlap between configurations. By compactness, one can find a subsequence for which the whole collection of observables converge. Each limiting measure defines a weakly exchangeable covariance matrix, and hence a limiting directing measure, that we refer to as the {\it infinite-volume Gibbs measure} and denote it by $\mathcal{G}_\beta$. We stress that this limit will generally not be unique. The whole set of Gibbs measure of the system is defined to be the closed convex hull of such limit points. By analogy with the finite-volume measure, the pure states at given disorder in this framework correspond to the vectors on which a realization of $\mathcal{G}_\beta$ is supported. We gather these considerations into a proposition.
\begin{prop}
Let $Q_{\beta,N}$ be the overlap matrix constructed by the sampling of the Gibbs measure $\mathcal{G}_{\beta,N}$. Then each limit point of $(Q_{\beta,N})_N$ defines an infinite-volume Gibbs random measure $\mathcal{G}_\beta$ on a canonical Hilbert space $\hilbert$.
\end{prop}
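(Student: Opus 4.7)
The plan is to assemble the proof out of three standard ingredients: (i) weak exchangeability and uniform boundedness of $Q_{\beta,N}$, (ii) compactness of the space of joint distributions of such matrices, and (iii) the Dovbysh--Sudakov representation theorem invoked in \cite{ds}.

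First I would observe that for each $N$ the random matrix $Q_{\beta,N}=\{q_{\sigma_i\sigma_j}\}_{i,j}$, obtained by drawing i.i.d.\ replicas $\sigma_i$ from the Gibbs measure $\mathcal{G}_{\beta,N}$, is weakly exchangeable: permuting the replica labels is a permutation of rows and columns of $Q_{\beta,N}$ and leaves its law invariant because $\mathcal{G}_{\beta,N}^{\otimes s}$ is exchangeable in the replicas. It is also positive semi-definite (being a Gram matrix in the Hilbert space spanned by the spin configurations under $q_{\sigma\sigma'}$) and its entries lie in the compact set $[-1,1]$ thanks to the normalization $q_{\sigma\sigma}=1$ and Cauchy--Schwarz.

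Next I would invoke compactness. The joint law of the entries of $Q_{\beta,N}$ lives on the product space $[-1,1]^{\mathbb{N}\times\mathbb{N}}$, which is compact in the product topology by Tychonoff. Hence the sequence of laws of $(Q_{\beta,N})_N$ is tight, and along a subsequence all finite-dimensional marginals converge. Because the entries are uniformly bounded, the moment functionals in \eqref{eqn overlap fct} determine the joint distribution of finitely many entries (Stone--Weierstrass on $[-1,1]^{s(s-1)/2}$), so subsequential convergence of these moments is equivalent to convergence in distribution of $Q_{\beta,N}$ to some random matrix $Q_{\beta}$. The limit $Q_\beta$ inherits weak exchangeability and positive semi-definiteness with unit diagonal.

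Finally, I would apply the Dovbysh--Sudakov theorem \cite{ds}: any weakly exchangeable, positive semi-definite infinite matrix with bounded diagonal is almost surely the Gram matrix of a sequence of i.i.d.\ samples from a random probability measure $\mathcal{G}_\beta$ on a canonical separable Hilbert space $\hilbert$, the inner product encoding the overlap. This random measure is precisely the infinite-volume Gibbs measure attached to the chosen subsequential limit. The main conceptual point, rather than a technical obstacle, is the appeal to Dovbysh--Sudakov; the one mild subtlety is that the directing measure is only defined up to the ambiguity of the subsequence, which is why the proposition only asserts that \emph{each} limit point gives rise to such a $\mathcal{G}_\beta$, consistent with the non-uniqueness emphasized in the preceding discussion.
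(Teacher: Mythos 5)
Your proposal is correct and follows exactly the route the paper itself takes (in the discussion preceding the proposition and in \cite{arguin}): weak exchangeability of the sampled overlap matrix, compactness of $[-1,1]^{\N\times\N}$ to extract subsequential limits of the observables \eqref{eqn overlap fct}, and the Dovbysh--Sudakov representation to realize the limit as sampling from a directing measure on a canonical Hilbert space. Your added details (positive semi-definiteness, Stone--Weierstrass, the non-uniqueness caveat) are all consistent with the paper's intended argument.
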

We will denote a generic element of $\hilbert$ by $\sigma$ and the inner product on $\hilbert$ by $q_{\sigma\sigma'}$ to be consistent with the notation of finite systems. We will write $d$ for the distance on $\hilbert$ induced by the inner product
$$ d(\sigma,\sigma')=\sqrt{1-q_{\sigma,\sigma'}}\ .$$
A tantalizing question related to the Gaussian process is to describe the limiting $\mathcal{G}_{\beta}$. For many systems, the organization of the pure states is expected to obey the appealing Parisi Picture.
\begin{df}[Parisi Picture]
A spin glass system is said to satisfy the {\em partial} Parisi Picture if the distance on the support of $\mathcal{G}_{\beta}$ is ultrametric almost surely, i.e.,
$$ \mathcal{G}_{\beta}^{\otimes 3}\left(d(\sigma_1,\sigma_2)\leq \max\{d(\sigma_1,\sigma_3),d(\sigma_2,\sigma_3)\}\right)=1\ .$$
It is said to satisfy the {\em full} Parisi picture if the law of $\mathcal{G}_\beta$ is a Derrida-Ruelle cascade.
\end{df}
The Derrida-Ruelle cascades will be defined below. A first step towards the Parisi picture that can be proven in many examples, and under which the partial picture implies the full one, is the celebrated extended Ghirlanda-Guerra identities
\begin{df}[EGGI]
A Gibbs measure $\mathcal{G}_\beta$ is said to satisfy the extended Ghirlanda-Guerra identities if and only if for all $s\in\N$ and for any bounded measurable function $f:[-1,1]^{s^2}\to\R$ and $g:[-1,1]\to\R$
\begin{equation} \begin{aligned}
& \E\mathcal{G}_\beta^{\otimes s+1}\Big(f(\{q_{\s_i\s_j}\}_{i,j\leq s})g(q_{\s_1\s_{s+1}})\Big)= \\
& \hspace{1cm} = \frac{1}{s}\E\mathcal{G}_\beta^{\otimes s}\Big(f(\{q_{\s_i\s_j}\}_{i,j\leq s})\Big)\E\mathcal{G}_\beta^{\otimes 2}\Big(q_{\s_1\s_2}\Big)+\frac{1}{s}\sum_{l=2}^s\E\mu^{\otimes s}\Big(f(\{q_{\s_i\s_j}\}_{i,j\leq s})g(q_{\s_1\s_l})\Big)\ .
\end{aligned} \end{equation}
\end{df} 
We remark that the identities are non-linear in the law of $\mathcal{G}_\beta$, because of the product appearing on the right-hand side. Therefore the identities cannot hold for convex combination of Gibbs measures, but only for extreme ones.
It is well established that EGGI is a necessary condition for the full Parisi picture to hold \cite{bk2}. 
This leads to the natural question, is EGGI a sufficient condition for ultrametricity ?
A recent and beautiful result of Panchenko shows that it actually is, provided the overlaps can only take a finite number of values.
\begin{thm}[Panchenko] \label{panchy_one}
If a measure $\mathcal{G}$ satisfies EGGI and the number of values taken by the non-diagonal entries is finite, then almost surely
$$\mathcal{G}^{\otimes 3}\Big(d(\sigma_i,\sigma_j)\leq \max\{d(\sigma_i,\sigma_k),d(\s_j,\s_k)\}\Big)=1\ .$$
\end{thm}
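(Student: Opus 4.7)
The proof proceeds by induction on the number $k$ of distinct overlap values. The base case $k = 1$ is vacuous since all three overlaps coincide. For the inductive step, assume the result for any measure with at most $k-1$ overlap values and let $\mathcal{G}$ satisfy EGGI with overlaps in $\{q_1 < \cdots < q_k\}$; write $q^* := q_k$ and $p := \E\mathcal{G}^{\otimes 2}(\mathbf{1}_{\{q_{12} = q^*\}})$. The idea is to expose a two-level structure on the support of $\mathcal{G}$: on the top level, the relation $\sigma \sim \sigma' \Leftrightarrow q_{\sigma\sigma'} = q^*$ will be shown to be an equivalence relation almost surely and to admit a well-defined cluster-level overlap between distinct classes; on the bottom level, overlaps within a single $\sim$-class are all $q^*$, so ultrametricity there is trivial. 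The quotient then has cluster-level overlaps in $\{q_1, \ldots, q_{k-1}\}$ and, with some care, inherits EGGI, so the inductive hypothesis yields ultrametricity for the quotient. The three-replica ultrametric inequality follows by case analysis on how many of the pairs among $\sigma_1, \sigma_2, \sigma_3$ lie in a common $\sim$-class.

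The crux is to establish the top-level transitivity together with cluster-overlap consistency, which I formulate as the single vanishing statement
\[
\E\mathcal{G}^{\otimes 3}\bigl(\mathbf{1}_{\{q_{12}=q^*\}}\mathbf{1}_{\{q_{13}=a\}}\mathbf{1}_{\{q_{23}=b\}}\bigr) = 0 \qquad \text{whenever } a \neq b.
\]
The marginal counterpart is immediate from EGGI with $s = 2$, $f = \mathbf{1}_{\{q_{12}=q^*\}}$ and $g = \mathbf{1}_{\{q = q^*\}}$, which gives
\[
\E\mathcal{G}^{\otimes 3}\bigl(\mathbf{1}_{\{q_{12}=q^*\}}\mathbf{1}_{\{q_{13}=q^*\}}\bigr) = \tfrac{1}{2}p(p+1),
\]
so that the law of $q_{13}$ conditional on $\{q_{12}=q^*\}$ is a half-half mixture of $\delta_{q^*}$ and the unconditional overlap law. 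To upgrade this marginal identity to the joint statement, I would pass to four replicas and apply EGGI with $s = 3$ to the test function $f = \mathbf{1}_{\{q_{12}=q^*\}}\mathbf{1}_{\{q_{13}=a\}}\mathbf{1}_{\{q_{23}=b\}}$, then invoke the twenty-four replica-permutation symmetries of the four-replica joint law to reexpress the resulting four-replica expectations in multiple ways. The finite-value hypothesis is essential here: it collapses EGGI together with exchangeability into a finite linear system in the numbers $\E\mathcal{G}^{\otimes 3}(\mathbf{1}_{\{q_{12}=a_1\}}\mathbf{1}_{\{q_{13}=a_2\}}\mathbf{1}_{\{q_{23}=a_3\}})$, from which the target vanishing of the off-diagonal probabilities is extracted.

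The main obstacle is precisely this combinatorial-algebraic extraction, which requires careful bookkeeping of which replica plays which role in each EGGI application, and identifying enough independent relations to pin down the off-diagonal probabilities. A cleaner alternative, pursued by Panchenko, is to recast $\mathcal{G}$ via the Dovbysh--Sudakov representation as a random sampling measure on a Hilbert space and to read EGGI as an explicit distributional invariance of that measure under resampling operations; the finite-value assumption then makes it possible to exclude, by inspection of finitely many configurations, any non-ultrametric triple of replicas with positive probability. Either route completes the transitivity step and, via the inductive scheme outlined above, concludes the proof.
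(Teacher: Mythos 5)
The paper itself offers no proof of this statement: it is quoted verbatim from Panchenko's preprint \cite{panchenko} and used as a black box, so your proposal can only be judged on its own merits. On those merits it has a genuine gap, and it sits exactly where you flag it. The entire difficulty of Panchenko's theorem is the passage from the marginal identity
\begin{equation*}
\E\mathcal{G}^{\otimes 3}\bigl(\mathbf{1}_{\{q_{12}=q^*\}}\mathbf{1}_{\{q_{13}=q^*\}}\bigr)=\tfrac{1}{2}p(p+1)
\end{equation*}
(which is indeed an immediate consequence of EGGI with $s=2$) to the \emph{joint} vanishing statement $\E\mathcal{G}^{\otimes 3}(\mathbf{1}_{\{q_{12}=q^*\}}\mathbf{1}_{\{q_{13}=a\}}\mathbf{1}_{\{q_{23}=b\}})=0$ for $a\neq b$. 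The marginal identity tells you that, conditionally on $\{q_{12}=q^*\}$, the overlap $q_{13}$ picks up an extra atom of mass $\tfrac12 p$ at $q^*$, but it says nothing about how that mass is correlated with $q_{23}$; naive applications of EGGI at $s=3$ always couple the new replica to \emph{one} distinguished old replica, so each identity you write down involves sums over the unknown three-replica probabilities with coefficients that do not obviously separate the off-diagonal terms. You assert that exchangeability plus finitely many values "collapses" everything into a solvable linear system, but you neither exhibit the system nor show it has enough independent equations to force the off-diagonal probabilities to zero --- and it is precisely because this direct route does not close that Panchenko's argument is considered nontrivial. (His actual proof is not a quotient induction at all: it shows that EGGI plus the finite-value hypothesis determine all joint moments of the overlap array from the two-replica law, and then identifies the array with that of a Ruelle cascade, which is ultrametric by construction.)

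A second, independent gap is the inductive step itself: even granting transitivity and cluster-overlap consistency, you need the cluster-level (quotient) random measure to satisfy EGGI in order to invoke the inductive hypothesis. This is asserted "with some care" but never argued, and it is not automatic --- the quotient is a push-forward of $\mathcal{G}$ under a random, disorder-dependent partition, and one must verify that the identities survive this operation. The case analysis at the end (zero, one, or all pairs in a common class) is fine once these two ingredients are in place, but as written the proposal is a strategy outline with its two load-bearing steps missing.
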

Panchenko's theorem thus establishes EGGI as a non-trivial yet simple criteria for a spin glass system to satisfy the Parisi picture. 
A large class of spin glass models, the so-called stochastically stable ones, \cite{aizenman_contucci,contucci_giardina} satisfy the Ghirlanda-Guerra identities when $g(q)=q$ for almost all value of $\beta$. The extended identities are much stronger, since valid for all bounded $g$, and proven in the case of REM-like models \cite{bk2}. For more involved models, like in the SK-type model, and as \cite{ghirlanda_guerra}, one can retrieve EGGI by constructing a perturbed Hamiltonian $X^\delta$ by adding to the original system independent Gaussian fields $(X_\sigma^{p})$ with covariance $N\delta_N\{q_{\s\s'}^p\}$ for integer $p>1$ where $\delta_N\to 0$.
The perturbation is chosen in such a way that: {\it i)} The free energy of the perturbed system is the same as the original one: $f^\delta=f$; {\it ii)} The standard procedure to prove the identities can be applied for each $p$.  The extended identities then hold for all bounded measurable $g(q)$ by approximation. 

The question we address in this paper is motivated by the use of EGGI as a tool to investigate the ultrametricity of the limiting Gibbs measure of the original system:
\begin{quote}
{\it If $X$ and $X^\delta$ are two spin glasses with the same free energy, does $\mathcal{G}^\delta_\beta$ being ultrametric implies so for $\mathcal{G}_\beta$ ? 
}
\end{quote}
In the next section, we provide an example of a simple system for which the answer is no, cf. Theorem \ref{energy_levels} and Corollary \ref{full_parisi_picture}. This in effect also shows that the Gibbs measure is not continuous with respect to the perturbation, cf. Corollary \ref{eggi_perturbed}. 
The procedure we choose will be different from an expansion in $p$-powers of the covariance matrix though equivalent, and {\it ad hoc} to our example. 
This has the advantage of being valid at all temperature as well as providing more insights and better control on the effect of the perturbation. The proofs are postponed to Section \ref{proofs}. For completeness, the method of $p$-power expansion is outlined in an appendix.

\section{Perturbations of Non-Irreducible Spin Glasses}
\subsection{Definition of the Example}
In \cite{bokis} some nonhierarchical versions of Derrida's  GREM were introduced. It was proved that the free energy always coincides in the thermodynamical limit with the free energy of a suitably constructed GREM. On the other hand it was shown in 
 \cite{bokis_two} that not all the systems of the form \cite{bokis} are genuinely ultrametric. Such models were called non-irreducible. We are going to consider here the simplest non-irreducible Hamiltonian.

Let $N\in \N$, and consider $\sigma=(\sigma_1,\sigma_2)\in\Sigma_{N}$ where $\sigma_1,\sigma_2 \in \Sigma_{N/2}$. 
We define the Hamiltonian
\beq \label{Hamiltonian}
X_{\s} \defi X^{(1)}_{\s_1} + X^{(2)}_{\s_2},
\eeq
where $(X^{(1)}_{\s_1})$, $\s_1\in\Sigma_{N/2}$,  are iid centered Gaussians of variance $Na_1$ and so is $(X^{(2)}_{\s_2})$, $\s_2\in\Sigma_{N/2}$, with variance $Na_2$ and  independent of $X^{(1)}$. Here $a_1, a_2$ are positive parameters such that $a_1+a_2 = 1$, and, without loss of generality we assume that $a_1 > a_2$.

By definition, the overlap $q_{\sigma\tau}$ between two distinct configurations $\sigma$ and $\tau$ can only take the values $1$ if $\sigma_1 = \tau_1, \sigma_2 = \tau_2$, $a_1$ if $\sigma_1=\tau_1$, $a_2$ if $\sigma_2=\tau_2$ and $0$ if neither projection of $\sigma$ corresponds. The reader can verify easily that the distance induced by the overlaps is not an ultrametric.

The limiting free energy $f_N(\beta) \defi \lim_{N\to \infty} f_N(\beta)$ of the spin glass \eqref{Hamiltonian} exists, is self-averaging and coincides with that of a two-levels GREM \cite{bokis}. (Our choice $a_1>a_2$ prevents the system from collapsing to a REM.) The Gibbs measure is however clearly a product measure and will remain so in the limit, $\mathcal G_{\beta, N}(\s) = \mathcal G_{\beta, N}^{(1)}\otimes \mathcal G_{\beta, N}^{(2)}$,
with $\mathcal{G}_{\beta, N}^{(1)}$ and $\mathcal G_{\beta, N}^{(2)}$ denoting the first and second marginal respectively. 
Hence, by the structure of the overlaps, such a measure cannot exhibit ultrametricity (unless the trivial one). 
\begin{lem}\label{lem unperturbed}
The support of the limiting Gibbs measure $\mathcal{G}_\beta$ of the system \eqref{Hamiltonian} is not ultrametric. In particular, it does not satisfy EGGI.
\end{lem}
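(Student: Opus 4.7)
The cornerstone of the argument is that, because the Hamiltonian \eqref{Hamiltonian} is an independent sum over disjoint coordinates, the finite-volume Gibbs measure factorizes,
\[
\mathcal{G}_{\beta,N}(\sigma_1,\sigma_2) = \mathcal{G}^{(1)}_{\beta,N}(\sigma_1)\,\mathcal{G}^{(2)}_{\beta,N}(\sigma_2),
\]
with each marginal $\mathcal{G}^{(i)}_{\beta,N}$ the Gibbs measure of an REM on $\Sigma_{N/2}$ with variance $Na_i$ per configuration. Since the overlap decomposes additively as $q_{\sigma\tau} = a_1\mathbf{1}[\sigma_1=\tau_1] + a_2\mathbf{1}[\sigma_2=\tau_2]$, every observable of the form \eqref{eqn overlap fct} splits into a product of observables of the two marginals; hence the factorization is transferred to any subsequential limit $\mathcal{G}_\beta = \mathcal{G}^{(1)}_\beta \otimes \mathcal{G}^{(2)}_\beta$.

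To exhibit a violation of ultrametricity I would sample three replicas $\sigma^{(1)},\sigma^{(2)},\sigma^{(3)}$ from $\mathcal{G}_\beta$ and consider the event
\[
A \defi \bigl\{q_{\sigma^{(1)}\sigma^{(2)}} = 0,\ q_{\sigma^{(1)}\sigma^{(3)}} = a_2,\ q_{\sigma^{(2)}\sigma^{(3)}} = a_1\bigr\}.
\]
On $A$ the distance $d(\sigma^{(1)},\sigma^{(2)})=1$ strictly exceeds both $d(\sigma^{(1)},\sigma^{(3)})=\sqrt{a_1}$ and $d(\sigma^{(2)},\sigma^{(3)})=\sqrt{a_2}$, in blatant violation of the ultrametric inequality. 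Decoded in terms of coordinate agreements, $A$ asks that $\sigma^{(2)}_1=\sigma^{(3)}_1\neq \sigma^{(1)}_1$ and $\sigma^{(1)}_2=\sigma^{(3)}_2\neq \sigma^{(2)}_2$. These two constraints depend only on the two independent marginals, so $\mathbb{P}(A)$ factorizes, and each factor is the probability, under one of the REM marginals, that two of three independent replicas coincide while being distinct from the third. For $\beta$ in the low-temperature phase of both REMs the infinite-volume marginals $\mathcal{G}^{(i)}_\beta$ are Poisson--Dirichlet and therefore carry atoms \cite{ruelle}, so each factor is strictly positive, and hence so is $\mathbb{P}(A)$.

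The ``in particular'' part then follows from Theorem \ref{panchy_one} by contraposition: since the overlaps of the system \eqref{Hamiltonian} can take only the four values $\{0,a_1,a_2,1\}$, the hypothesis of Panchenko's theorem is met, so validity of the EGGI would force $\mathcal{G}_\beta$ to be ultrametric almost surely, contradicting what was just established. Thus $\mathcal{G}_\beta$ cannot satisfy the EGGI.

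The main obstacle in this strategy is the transfer of the product structure to the infinite-volume limit used in the first step. At the level of every joint overlap moment the factorization is automatic from the independence of $X^{(1)}$ and $X^{(2)}$, so each limit point along the subsequence defining $\mathcal{G}_\beta$ inherits the product form; the remaining ingredient, positivity of the replica-coincidence probabilities in each marginal, is classical low-temperature REM input.
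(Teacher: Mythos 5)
Your proposal is correct and follows essentially the same route as the paper: the product structure $\mathcal{G}_{\beta,N}=\mathcal{G}^{(1)}_{\beta,N}\otimes\mathcal{G}^{(2)}_{\beta,N}$ passes to the limit and forces overlap configurations (your event $A$) incompatible with ultrametricity, and the EGGI claim is the contrapositive of Theorem \ref{panchy_one} since the overlaps take only the values $\{0,a_1,a_2,1\}$. Your explicit positivity argument via the Poisson--Dirichlet atoms quietly requires $\beta>\beta_2$ (both REM marginals frozen), which is the same implicit restriction as the paper's ``unless the trivial one'' caveat and the regime of its later results, so this is a faithful filling-in of the paper's terse proof rather than a different argument.
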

\begin{proof}
The second assertion can be checked directly. It is also a straightforward application of Panchenko's theorem.
\end{proof}
\subsection{The perturbed Hamiltonian}
We now introduce a {\it small perturbation} of \eqref{Hamiltonian}. For a parameter $\de>0$ which will measure the ''strength'' of the perturbation, we consider an additional family of independent centered Gaussians 
$(X^{\de}_{\sigma_1,\sigma_2})$ with variance $N a_2 \delta\ \omega(N)$, where as $N\to\infty$,
\[\bea
\omega(N)\to 0, \ N\omega(N)\to\infty \ .
\eea
\]
\begin{ass} \label{assump_speed}
$N \omega(N)$ tends to $+ \infty$ at least as fast as $\alpha\log N$ for $\alpha>\frac{2}{\log 2}$. 
 \end{ass}
(It will turn out that this speed is, as long as the extremal process is concerned, optimal, in the sense that smaller perturbations leave the 
asymptotical properties of the extremal process unchanged, cfr. Remark \ref{explanations} below.) \\

We set the {\it perturbed Hamiltonian} to be 
\beq 
X_\s^{\de} \defi X^{(1)}_{\s_1} + X^{(2)}_{\s_2} + X_{\s_1, \s_2}^{\de}.
\eeq
We define partition function $Z_{\delta,N}(\beta)$, free energy $f_{\delta,N}(\beta)$ and Gibbs measure $\mathcal G_{\beta, \de, N}$ in the obvious manner. 

The following shows that such a perturbation is indeed {\it small}: 
\begin{lem}
The limit $f_\de(\beta) \defi \lim_{N\to \infty} f_{\delta,N}(\beta)$ exists and is self-averaging. Moreover, for any $\delta>0$, 
\[
f_\delta(\beta) = f(\beta). 
\]
\end{lem}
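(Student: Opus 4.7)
The plan is to sandwich $\E f_{\delta,N}(\beta)$ between $\E f_N(\beta)$ and $\E f_N(\beta) + o(1)$ by two applications of Jensen's inequality that exploit the independence of the perturbation $(X^\delta_{\sigma_1,\sigma_2})$ from $(X^{(1)}, X^{(2)})$ and the decay $\omega(N)\to 0$. Since $f(\beta) = \lim_N \E f_N(\beta)$ is already known to exist and to be self-averaging for the unperturbed system, this sandwich immediately yields existence of $\lim_N \E f_{\delta,N}(\beta)$ together with the equality $f_\delta(\beta) = f(\beta)$; almost-sure convergence of $f_{\delta,N}(\beta)$ to that common limit will then follow from a separate Gaussian concentration argument.

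For the upper bound, I would condition on $(X^{(1)}, X^{(2)})$ and apply Jensen's inequality to the concave function $\log$, taking expectation over $X^\delta$ only, to get
\[
\E_{X^\delta} \log Z_{\delta,N}(\beta) \;\le\; \log \E_{X^\delta} Z_{\delta,N}(\beta) \;=\; \log Z_N(\beta) + \frac{\beta^2 N a_2 \delta\,\omega(N)}{2},
\]
the equality using independence of the $(X^\delta_{\sigma_1,\sigma_2})$ and their variance $Na_2\delta\omega(N)$. Averaging over $(X^{(1)}, X^{(2)})$ and dividing by $N$ yields $\E f_{\delta,N}(\beta) \le \E f_N(\beta) + \tfrac{1}{2}\beta^2 a_2 \delta\,\omega(N)$, and the extra term vanishes as $\omega(N)\to 0$.

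For the lower bound, I would factor $Z_{\delta,N}(\beta) = Z_N(\beta) \sum_\sigma \mathcal G_{\beta,N}(\sigma)\,e^{\beta X^\delta_{\sigma_1,\sigma_2}}$ and apply Jensen's inequality to $\log$ against the probability weights $\mathcal G_{\beta,N}(\sigma)$:
\[
\log \sum_\sigma \mathcal G_{\beta,N}(\sigma)\,e^{\beta X^\delta_{\sigma_1,\sigma_2}} \;\ge\; \beta \sum_\sigma \mathcal G_{\beta,N}(\sigma)\,X^\delta_{\sigma_1,\sigma_2}.
\]
Taking $\E_{X^\delta}$ annihilates the right-hand side, whence $\E f_{\delta,N}(\beta) \ge \E f_N(\beta)$, and combined with the upper bound this gives $\lim_N \E f_{\delta,N}(\beta) = f(\beta)$. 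For self-averaging, I would observe that $\max_\sigma \mathrm{Var}(X^\delta_\sigma) = N(1 + a_2\delta\omega(N)) = O(N)$, so that $f_{\delta,N}(\beta)$ is Lipschitz in the underlying independent standard Gaussians with constant $O(1/\sqrt N)$; Borell's Gaussian concentration inequality together with Borel--Cantelli then deliver almost-sure convergence to the same limit. I do not anticipate any serious obstacle: the lemma is essentially the quantitative assertion that an independent Gaussian perturbation whose per-spin variance $a_2\delta\omega(N)$ vanishes cannot shift the free energy, and the interest of the paper lies precisely in the contrast with the drastic effect the same perturbation has on the Gibbs measure.
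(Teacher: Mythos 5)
Your proposal is correct and follows essentially the same route as the paper: the same two Jensen inequalities (annealing over the perturbation field for the upper bound, and Jensen against the unperturbed Gibbs weights, using that the centered perturbation is independent of them, for the lower bound), with self-averaging obtained from Gaussian concentration via the $O(N^{-1/2})$ Lipschitz bound. No gaps.
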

\begin{proof}
Denoting by $\E_\de$ integration with respect to the $X_{\s_1, \s_2}^\de$-field, it follows by Jensen's inequality that   
\[ \bea
\E f_{\de,N}(\beta) &\leq \E {1\over N} \log  \sum_{\s\in \Sigma_N} \E_\de\Big[ \exp\left[\beta X_\s + \beta X_{\s_1, \s_2}^\de \right] \Big] = \E f_N(\beta) + {\beta^2\over 2} a_2 \de\omega(N). 
\eea \]
Taking the limit $N\to \infty$ gives the upper bound (in expectation). On the other hand, we may rewrite 
\[ \bea
\E f_{\de,N}(\beta) &= \E \log \mathcal{G}_{\beta, N}\Big(\exp \beta X_\s^\de \Big) + \E f_N(\beta) \\
& \geq \E \mathcal{G}_{\beta, N} \Big(\beta X_{\s_1,\s_2}^\de \Big) + \E f_N(\beta) = \E f_N(\beta),
\eea \]
where the inequality follows again by Jensen. This yields the lower bound (in expectation).
The self-averaging follows by concentration of measure, see e.g. Theorem 2.2.4 in \cite{talagrand}, once it is observed that $f_{\delta,N}(\beta)$ has Lipshitz constant smaller than $$\beta N^{-1/2}\sqrt{a_1+a_2(1+o(1))}\ .$$
\end{proof}

\subsection{Gibbs Measure of the Perturbed Hamiltonian}
In order to describe the properties of the Gibbs measure associated to the perturbed Hamiltonian, we need to recall some objects, related to the Derrida-Ruelle cascades. 

Consider the point process  $(\xi_{\boldsymbol i}, \boldsymbol i \in \N^2)$, with $\xi_{\boldsymbol i} \defi \xi_{\boldsymbol i_1}^1 + \xi_{\boldsymbol i_1, i_2}^2$, with the following properties: {$1.$}  $(\xi_{\boldsymbol i_1}^1, i_1 \in \N)$ a Poisson Point Process of density $\beta_1 e^{- \beta_1 t}dt$, with $\beta_1 \defi \sqrt{\log 2 \over a_1}$. {$2.$} For given $i_1$ the Point Process  $(\xi_{\boldsymbol i_1, i_2}^2, i_2 \in \N)$ is Poissonian with density $\beta_2 e^{- \beta_2 t}dt$, with $\beta_2 \defi \sqrt{\log 2 \over a_2}$. {$3.$} For different $i_1, i_1'$, the point processes $(\xi_{\boldsymbol i_1, j}^2, j)$ and $(\xi_{\boldsymbol i'_1, j}^2, j)$ are independent. (Remark that, in virtue of our choice $a_1> a_2$, it holds $\beta_1 < \beta_2$ strictly: this will become important.)

We construct a {\it marked point process} (mPP for short) on $\R^2 \times \{0,a_1\}$ by setting 
\[
\mathcal X_{DR} \defi \sum_{\boldsymbol i \neq \boldsymbol i'} \de_{\xi_{\boldsymbol i}, \xi_{\boldsymbol i'}, q_{\boldsymbol i \boldsymbol i'}},
\]
where the overlap $q_{\boldsymbol i \boldsymbol i'}$ of two multi-indices $\boldsymbol i, \boldsymbol i'$ is defined as $0$ if $i_1\neq i_1'$ and $a_1$ otherwise. Note that by construction the overlaps of $\mathcal X$ define an ultrametric.

In the limit $N\to\infty$, it is convenient to look at the {\it shifted energy levels} $(X_\s^\de - a_N)$ for
\[ \bea
a_N&\defi a_N^{(1)} + a_N^{(2)}(\delta)\\
a_N^{(1)} &\defi N \sqrt{a_1 \log 2} - {a_1\over 2\sqrt{a_1 \log 2}} \log(2\pi a_1 N) , \\
a_N^{(2)} (\delta)&\defi N \sqrt{a_2 (1+\de_N) \log 2} -  {a_2(1+\de_N)\over 2 \sqrt{a_2 (1+\de_N) \log 2}}  \log(2\pi N a_2(1+\de_N)).
\eea\]
where we write $\delta_N=\delta \omega_N$ for short.

The following shows that such a small perturbation can turn a non-ultrametric system such as \eqref{Hamiltonian} into an ultrametric one.
We formulate the result first for the {\it extremal process}.  

\begin{thm}\label{energy_levels}
Under assumption \ref{assump_speed}, and for any $\delta>0$ the mPP of the shifted energy levels $$\mathcal X^\delta_N \defi  \sum \de_{X^\de_\s - a_N, X^\de_\tau - a_N, q(\s, \tau)}$$ converges weakly to $\mathcal X_{DR}$. 
\end{thm}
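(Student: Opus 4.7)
The plan is to decouple $X^\delta$ into two effectively independent levels and invoke the known extremal-process convergence for the two-level GREM. Writing
$$
X^\delta_\sigma = X^{(1)}_{\sigma_1} + Y^{(\sigma_1)}_{\sigma_2}, \qquad Y^{(\sigma_1)}_{\sigma_2} \defi X^{(2)}_{\sigma_2} + X^\delta_{\sigma_1,\sigma_2},
$$
one observes that for each fixed $\sigma_1$ the family $(Y^{(\sigma_1)}_{\sigma_2})_{\sigma_2}$ is iid centered Gaussian of variance $Na_2(1+\delta_N)$, that $(X^{(1)}_{\sigma_1})_{\sigma_1}$ is independent of all the $Y$'s and iid $N(0,Na_1)$, and that the only cross-correlation between distinct first-level branches comes from the shared $X^{(2)}_{\sigma_2}$: $\operatorname{Cov}(Y^{(\sigma_1)}_{\sigma_2},Y^{(\sigma_1')}_{\sigma_2'}) = Na_2\mathbf{1}\{\sigma_2=\sigma_2'\}$ for $\sigma_1\neq\sigma_1'$. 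The centerings $a_N^{(1)}$ and $a_N^{(2)}(\delta)$ are the exact REM normalizations for $2^{N/2}$ iid Gaussians of variance $Na_1$ and $Na_2(1+\delta_N)$ respectively, so the two individual level processes converge, appropriately shifted, to the expected PPPs.

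The natural auxiliary model is the fully decoupled two-level GREM $\tilde X_\sigma \defi X^{(1)}_{\sigma_1}+\tilde Y^{(\sigma_1)}_{\sigma_2}$ with $(\tilde Y^{(\sigma_1)}_{\sigma_2})$ iid $N(0,Na_2(1+\delta_N))$ over all $(\sigma_1,\sigma_2)$. For $\tilde X$ the shifted mPP converges to $\mathcal{X}_{DR}$ by standard GREM extremal arguments: the first-level REM-process converges to a PPP of intensity $\beta_1 e^{-\beta_1 t}dt$; conditionally on each first-level extremal point, the attached second-level extremal process converges independently to a PPP of intensity $\beta_2 e^{-\beta_2 t}dt$ (with $\beta_2\to\sqrt{\log 2/a_2}$ as $\delta_N\to 0$); the strict inequality $\beta_1<\beta_2$ guaranteed by $a_1>a_2$ keeps the cascade non-degenerate; and marks in $\{0,a_1\}$ emerge since extremal pairs of $\tilde X$ either share $\sigma_1$ (overlap $a_1$, same cluster) or do not (overlap $0$), while the possibility of overlap $a_2$ is ruled out by a first-moment estimate of order $O(2^{-N/2})$.

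The heart of the proof is to transfer this convergence from $\tilde X$ to $X^\delta$. Since the two processes coincide on all pairs with $\sigma_2\neq\sigma_2'$, it suffices to show that the overlap-$a_2$ pairs of $X^\delta$ (those with $\sigma_1\neq\sigma_1'$, $\sigma_2=\sigma_2'$) disappear from any bounded window above $a_N$ in the limit. The natural approach is conditional: given the first-level extremal configurations $\sigma_1^{(1)},\sigma_1^{(2)},\ldots$ and the field $X^{(2)}$, the argmaxes $\sigma_2^{(i)}\defi\arg\max_{\sigma_2}Y^{(\sigma_1^{(i)})}_{\sigma_2}$ become conditionally independent, each distributed on $\Sigma_{N/2}$ according to a $X^{(2)}$-biased law. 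Under Assumption~\ref{assump_speed} the perturbation-variance $Na_2\delta_N\geq a_2\delta\alpha\log N$ is large enough to scatter this law across $\Sigma_{N/2}$ so that the collision probability $\prob(\sigma_2^{(i)}=\sigma_2^{(j)})$ over the $O(1)$ top clusters vanishes; the $\log N$-shift built into $a_N^{(2)}(\delta)$ via the factor $\sqrt{1+\delta_N}$ is exactly the REM-normalization needed to re-center the second-level extremes after this scrambling.

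The main obstacle is the sharp quantitative control of the argmax-randomization, which cannot be obtained by a naive first-moment bound on the number of overlap-$a_2$ pairs in a bounded window above $a_N$: the correlation $\operatorname{Cov}(X^\delta_\sigma,X^\delta_{\sigma'})=Na_2$ inflates the bivariate Gaussian joint tail enough to defeat the entropy of $2^{3N/2}$ pairs regardless of the size of $\delta_N$. The proof must instead exploit the conditional independence given $X^{(2)}$, balancing the entropic factor $2^{N/2}$ of $\sigma_2$-choices against the effective randomization provided by the perturbation of variance $\Theta(\log N)$; the threshold $\alpha>2/\log 2$ arises at the break-even point of this balance, and below it the argmaxes remain pinned to the top of $X^{(2)}$, leaving a residual $O(1)$ population of overlap-$a_2$ pairs incompatible with the Derrida--Ruelle structure.
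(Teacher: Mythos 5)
Your overall strategy --- compare $X^\delta$ with the fully decoupled two-level GREM and show that pairs with $\sigma_1\neq\tau_1$, $\sigma_2=\tau_2$ disappear from any compact window above $a_N$ --- is the right one, and you correctly identify the conditional independence of $X^\delta_{\sigma_1,\sigma_2}$ and $X^\delta_{\tau_1,\sigma_2}$ given $X^{(2)}_{\sigma_2}$ as the operative mechanism. But the write-up stops exactly where the theorem lives: your final paragraph declares the ``sharp quantitative control of the argmax-randomization'' to be the main obstacle and describes what the proof \emph{must} do without producing any estimate. No bound on the collision probability of the conditional argmaxes is derived, nothing explains why the threshold is $\alpha>2/\log 2$ rather than some other rate, and the preliminary localization of each level to a compact window around its own maximum (the analogue of Lemma~\ref{localization}, without which any counting argument over pairs is performed on the wrong event) is missing. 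As it stands this is an outline with the key lemma unproved.

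Moreover, your assertion that the result ``cannot be obtained by a naive first-moment bound \dots regardless of the size of $\delta_N$'' misdiagnoses the working argument. What fails is the \emph{unrestricted} first moment: the expected number of overlap-$a_2$ pairs with both total energies in a window above $a_N$ is in general exponentially large, because the dominant pairs realize the two energies through an atypical split between the two levels. The remedy is not to abandon the union bound but to localize first: on the high-probability event that every extremal configuration has $\hat X^1_{\sigma_1}$ and $\hat X^2_{\sigma_2}+X^\delta_{\sigma_1,\sigma_2}$ each confined to a compact set (Lemma~\ref{localization}), the union bound over the $2^{3N/2}$ pairs does close. One writes the joint probability as in \eqref{eqn prob square}: conditioning on the value of $X^{(2)}_{\sigma_2}$ makes the two perturbation variables independent, so the conditional probability is the \emph{square} of a single one. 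Each factor forces the perturbation variable, of variance $Na_2\delta_N$, to compensate the shift $|\Delta a^{(2)}_N|\sim N\delta_N\sqrt{a_2\log 2}/2$ built into $a^{(2)}_N(\delta)$, at Gaussian cost $\exp(-N\delta_N\log 2/8)$; squaring yields the decisive extra decay $\exp(-N\delta_N\log 2/4)$ beyond what a single configuration pays, and this beats the residual polynomial factor of order $\sqrt{N}$ left over after the Gaussian integration precisely when $N\delta_N>\tfrac{2}{\log 2}\log N$. That computation is the content of Lemma~\ref{propa}, and it is what your proposal needs to supply before the Poisson-approximation step (which you, like the paper, may reasonably import from the GREM literature) can be invoked.
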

To relate this result to the Parisi picture, we need to recall the {\it multiplicative} Derrida-Ruelle cascades. By these we understand the image of $\mathcal X$ under the mapping $s\mapsto \exp(\beta s)$, where $\beta > \beta_2$. This is simply the above marked point process with points $\xi_{\boldsymbol i}$ replaced by $\eta_{\boldsymbol i} \defi \exp(\beta \xi_{\boldsymbol i})$, that is 
\[
\mathcal Y \defi \sum_{\boldsymbol i \neq \boldsymbol i'} \de_{\eta_{\boldsymbol i}, \eta_{\boldsymbol i'}, q_{\boldsymbol i \boldsymbol i'}}.
\]
We observe that $\beta> \beta_2$ insures that $\sum \eta_i < \infty$ almost surely.  By $\mathcal Z$ we understand the {\it normalized} Derrida-Ruelle multiplicative cascades, namely the above Point Process where the points $\eta_{\boldsymbol i}$ are replaced by their normalized counterparts $\overline \eta_{\boldsymbol i} \defi \eta_{\boldsymbol i}/\sum_{\boldsymbol j} \eta_{\boldsymbol j}$.

The normalized cascade is nicely expressed in terms the Bolthausen-Sznitman coalescent, introduced in \cite{boszni_ruelle}. This is a continuous time Markov
process $(\psi_t, t\geq 0)$ taking values in the compact set of partitions on $\N$. We call a partition $\mathcal C$ finer than $\mathcal D$, in notation ${\mathcal C} \succ \mathcal
D$, provided that the sets of $\mathcal D$ are unions of the sets of $\mathcal C$. The process $(\psi_t, t\geq 0)$ has the following properties: {\it i.} If $t\geq s$  then $\psi_s \succ \psi_t$. 
{\it ii.} The law of $(\psi_t, t\geq 0)$ is invariant under permutations involving only a finite number of elements. {\it iii.} $\psi_0 = \{\{1\},\{2\},...\}$. We denote the equivalence relation associated with $\psi_t$ by $\sim_t$. To every pair of point corresponds a stopping time $t(i,j):=\min\{l: i \sim_{t} j\}$. 

For $x_l := \beta_l/\beta$, $l=1,2$, we pick $t_0 = 0 < t_1 < t_2 < \infty$ with $t_l = \log(x_2/x_{2-l})$. The overlap $q_{ij}$ is defined to be $0$ if $t(i,j)>t_2$, $a_1$ if $t_2>t(i,j)>t_1$ and $1$ otherwise.
Given a Poisson Point Process $(z_i, i \in \N)$, one can construct a marked point process \cite{bokis_two}
$$\sum_{i \neq i'} \de_{z_i, z_{i'}, q_{ii'}}$$
where the marks $q_{ii'}$ are chosen randomly as above, independently of the point process $(z_i, i \in \N)$. The law of such an object is denoted by $P \sqcap \mathcal C$, where $P$ is the law of the underlying point process, and $\mathcal C$ that of the coalescent. A normalized cascade can be shown to have the law $P_x \sqcap \mathcal C$ where $P_x$ is the law of the normalization of the Poisson point process with 
density $x t^{-x -1} dt$ on $\R_+$ \cite{boszni_ruelle}.

\begin{cor}[Full Parisi Picture] \label{full_parisi_picture}
Let $\beta > \beta_2$. Then the marked point process of the Gibbs measure associated to the perturbed Hamiltonian
\[
\sum_{\s \neq \tau} \de_{\mathcal G_{\beta, \de, N}(\s), \mathcal G_{\beta, \de, N}(\tau), q(\s, \tau)}
\]
converges weakly towards $P_{x_2} \sqcap \mathcal C$, where $P_{x_2}$ is the law of the normalization of the poisson point process with 
density $x_2 t^{-x_2 -1} dt$ on $\R_+$. 
\end{cor}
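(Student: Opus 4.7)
The plan is to push the extremal-process convergence of Theorem \ref{energy_levels} through exponentiation and normalization, and then to read off the desired representation by invoking the Bolthausen-Sznitman construction.

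I would begin by writing the Gibbs weight in the form
\[
\mathcal G_{\beta, \de, N}(\s) = \frac{\exp \beta(X_\s^\de - a_N)}{\tilde Z_N}, \qquad \tilde Z_N\defi \sum_{\tau\in\Sigma_N} \exp \beta(X_\tau^\de - a_N),
\]
so that the mPP of Gibbs weights is the image of the mPP of exponentiated shifted energies under the random map $(\eta, \eta', q)\mapsto (\eta/\tilde Z_N, \eta'/\tilde Z_N, q)$. Since $s\mapsto e^{\beta s}$ is a homeomorphism $\R \to (0,\infty)$, Theorem \ref{energy_levels} together with the continuous mapping theorem give, in the vague topology on $(0,\infty)^2\times\{0,a_1\}$, the weak convergence
\[
\mathcal Y_N \defi \sum_{\s\neq \tau}\de_{\exp\beta(X_\s^\de -a_N),\ \exp\beta(X_\tau^\de-a_N),\ q(\s,\tau)} \ \Longrightarrow\ \mathcal Y.
\]

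The central step is then to upgrade this to the joint weak convergence $(\mathcal Y_N, \tilde Z_N) \Rightarrow (\mathcal Y, Z_\infty)$, with $Z_\infty\defi \sum_{\boldsymbol i}\eta_{\boldsymbol i}$; the hypothesis $\beta>\beta_2>\beta_1$ gives $x_1, x_2<1$ and so ensures $Z_\infty<\infty$ almost surely. I would achieve this by the standard truncation $\tilde Z_N = \tilde Z_N^{>A} + \tilde Z_N^{\leq A}$, with $\tilde Z_N^{>A}$ collecting shifted energies above $-A$. The top part is a continuous functional of the restriction of $\mathcal Y_N$ to the compact set $[e^{-\beta A},\infty)^2$ and hence converges jointly with $\mathcal Y_N$ to its cascade counterpart. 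For the bottom part, a Gaussian-tail computation analogous to the one internal to the proof of Theorem \ref{energy_levels}, simplified by the fact that the perturbed covariance still factorizes over the two hierarchies, shows that $\prob(\tilde Z_N^{\leq A}>\e)\to 0$ as $A\to \infty$, uniformly in $N$. Applying the continuous mapping theorem once more to the normalization map $(x,y,q;Z)\mapsto (x/Z, y/Z, q)$ then transfers the convergence to the marked point process of Gibbs weights, whose limit is the normalized multiplicative cascade $\overline{\mathcal Y}$.

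Finally, I would identify $\overline{\mathcal Y}$ with $P_{x_2}\sqcap\mathcal C$ by invoking the Bolthausen-Sznitman representation in \cite{boszni_ruelle}: the unmarked points of $\overline{\mathcal Y}$ are distributed as the normalization of a PPP of intensity $x_2 t^{-x_2-1}dt$ on $\R_+$, while the overlap marks between distinct points are precisely those generated by the coalescent $\mathcal C$ at the times $t_1$ and $t_2$ defined just above the statement. The main obstacle in the plan is the joint convergence in the second step: the weak convergence of Theorem \ref{energy_levels} is vague and a priori only controls the extremes of the energy landscape, while the normalization depends on the entire partition function; it is precisely the assumption $\beta>\beta_2$ that makes the bulk contribution to $\tilde Z_N$ negligible and legitimizes the exchange of limits.
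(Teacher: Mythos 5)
Your proposal is correct and follows essentially the same route as the paper: exponentiate the shifted energy levels via the continuous mapping theorem, show that normalization by the partition function commutes with the limit $N\to\infty$ (the paper delegates this to \cite[pp.~34--35]{bokis_two}, whereas you spell out the standard truncation argument, valid since $\beta>\beta_2$ makes the bulk of $\tilde Z_N$ negligible), and then identify the normalized cascade with $P_{x_2}\sqcap\mathcal C$ via the Bolthausen--Sznitman representation of \cite{boszni_ruelle}. No genuine gap.
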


A direct consequence of the above is that the EGGI hold for the perturbed Hamiltonian, thereby proving the discontinuity of the Gibbs state under the perturbation.
\begin{cor}[Perturbed Hamiltonian and EGGI] \label{eggi_perturbed} 
The limiting Gibbs measure of the perturbed system $\mathcal{G}_{\beta,\delta}$ satisfies EGGI. In particular, in the sense of the topology induced by the functions \eqref{eqn overlap fct},
$$ \lim_{\delta\to 0} \mathcal{G}_{\beta,\delta}\neq \mathcal{G}_\beta$$
where $\mathcal{G}_\beta$ is the limiting Gibbs measure of the original system \eqref{Hamiltonian}.
\end{cor}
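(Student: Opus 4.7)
The claim splits naturally into two assertions, and both can be read off from Corollary \ref{full_parisi_picture} with only bookkeeping work.

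For the first assertion (EGGI for $\mathcal{G}_{\beta,\delta}$), the plan is simply to reduce to the fact that Ruelle cascades satisfy the extended Ghirlanda–Guerra identities. Corollary \ref{full_parisi_picture} identifies the unique limit of the overlap array generated by $\mathcal{G}_{\beta,\delta,N}$ as the marked point process $P_{x_2}\sqcap \mathcal{C}$, whose directing measure is a normalized Ruelle cascade. That such cascades satisfy EGGI is classical: one computes the left- and right-hand sides of the identity directly from the Poisson–Dirichlet description of the weights together with the exchangeability and compatibility properties of the Bolthausen–Sznitman coalescent, or equivalently invokes the stochastic stability of the cascade weights, which is strictly stronger. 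Since the limit is unique, any subsequential infinite-volume Gibbs measure in the sense of the earlier Proposition is a single object that satisfies EGGI.

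For the second assertion (discontinuity), the strategy is to exhibit a bounded overlap observable of the form \eqref{eqn overlap fct} whose expectation under $\mathcal{G}_{\beta,\delta}$ is bounded away from its value under $\mathcal{G}_\beta$ as $\delta\downarrow 0$. The unperturbed Gibbs measure factorizes as $\mathcal{G}_\beta^{(1)}\otimes \mathcal{G}_\beta^{(2)}$, with the two marginals being the infinite-volume Gibbs measures of independent REMs at variances $a_1$ and $a_2$; consequently under $\mathbb{E}\,\mathcal{G}_\beta^{\otimes 2}$ the overlap $q(\sigma,\tau)$ charges the value $a_2$ with strictly positive probability $p_2>0$ (namely the event that the second marginals coincide while the first do not). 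By contrast, the perturbed limit from Corollary \ref{full_parisi_picture} has overlap array supported on $\{0,a_1\}$. Setting
\[
m_k(\delta)\defi \mathbb{E}\,\mathcal{G}_{\beta,\delta}^{\otimes 2}\bigl(q_{\sigma_1\sigma_2}^{k}\bigr),\qquad m_k(0)\defi \mathbb{E}\,\mathcal{G}_{\beta}^{\otimes 2}\bigl(q_{\sigma_1\sigma_2}^{k}\bigr),
\]
one has $m_k(\delta)=c_1(\delta)\,a_1^{k}$ for every $\delta>0$ and $m_k(0)=p_1 a_1^{k}+p_2 a_2^{k}$ with $p_2>0$. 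Since $0<a_2<a_1<1$, the two sequences $(a_1^{k})_{k\geq 1}$ and $(a_2^{k})_{k\geq 1}$ are linearly independent, so no choice of $c_1$ matches $m_\cdot(0)$ simultaneously in all $k$; hence at least one moment distinguishes $\lim_{\delta\to 0}\mathcal{G}_{\beta,\delta}$ from $\mathcal{G}_\beta$ in the topology generated by \eqref{eqn overlap fct}.

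I do not anticipate a genuine obstacle beyond what is already proved: all the nontrivial analytic work is absorbed in Theorem \ref{energy_levels} and its corollary, and the present statement is essentially an accounting step. The only minor caveat is to verify that EGGI for the normalized Ruelle cascade (which is a property of the idealized object) transfers to $\mathcal{G}_{\beta,\delta}$; this is immediate because the overlap observables in \eqref{eqn overlap fct}, which define the topology in which EGGI is stated, are exactly the ones controlled by the weak convergence in Corollary \ref{full_parisi_picture}.
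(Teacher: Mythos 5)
Your proposal is correct in substance, and the first half coincides with the paper's argument: the paper likewise proves EGGI for $\mathcal{G}_{\beta,\delta}$ by noting that Corollary \ref{full_parisi_picture} identifies the limit as a Derrida--Ruelle cascade and that cascades satisfy EGGI (the paper simply cites Bovier--Kurkova \cite{bk3}; your aside that stochastic stability is ``strictly stronger'' than EGGI is dubious but not load-bearing). Where you genuinely diverge is the discontinuity statement. The paper argues abstractly: by Lemma \ref{lem unperturbed} the product measure $\mathcal{G}_\beta$ fails EGGI, the cascade satisfies it, and the identities are continuous in the topology generated by \eqref{eqn overlap fct}, so $\lim_{\delta\to0}\mathcal{G}_{\beta,\delta}\neq\mathcal{G}_\beta$. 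You instead exhibit explicit distinguishing observables of the form \eqref{eqn overlap fct} with $s=2$: for $\beta>\beta_2$ both REM marginals are in their low-temperature phase, so by independence of the two marginals the unperturbed overlap charges the value $a_2$ with positive probability $p_2>0$, whereas the cascade overlap is supported on $\{0,a_1,1\}$; linear independence of the sequences $k\mapsto a_2^k$ and $k\mapsto 1,\,a_1^k$ then forces some moment to differ. This is more concrete and self-contained (it does not invoke continuity of the nonlinear EGGI functionals, nor Lemma \ref{lem unperturbed}), at the cost of a small computation about the REM marginals; the paper's route is shorter but leans on facts left as ``checked directly.'' One small correction to your bookkeeping: since the cascade directing measure is purely atomic, two replicas coincide with positive probability, so $m_k(\delta)=c_\infty+c_1a_1^{k}$ rather than $c_1(\delta)a_1^{k}$ (and these constants do not depend on $\delta$, the limiting cascade being $\delta$-independent); the linear-independence argument is unaffected because $a_2^{k}$ is not in the span of $\{1,a_1^{k}\}$ for $0<a_2<a_1<1$. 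Also note explicitly that $p_2>0$ uses $\beta>\beta_2$, which is in force since Corollary \ref{full_parisi_picture} assumes it.
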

\begin{proof}
By Corollary \ref{full_parisi_picture}, $\lim_{\delta \to 0} \mathcal{G}_{\beta, \delta}$ is a Derrida-Ruelle cascade. They are well-known to satisfy EGGI, cfr. Bovier and Kurkova's work \cite{bk3}. By Lemma \ref{lem unperturbed}, $\mathcal{G}_\beta$ does not. The conclusion follows from the fact that the identities are continuous in the topology determined by the observables \eqref{eqn overlap fct}. 
\end{proof}

\section{Proofs}\label{proofs}
The proofs of Theorem \ref{energy_levels} and of Corollary \ref{full_parisi_picture} very closely follow the line of proof of the Main Theorem in \cite{bokis_two}. To keep this work reasonably self-contained we shall however outline the crucial steps, especially those steps which differ from the analysis in \cite{bokis_two}. (It turns out that these differences are only very small.) 

Throughout, $K$ will denote a constant, not necessarily the same at different occurrences.  
We shorten the notation for the shifted processes $\hat{X}^1_{\sigma_1} \defi X_{\sigma_1}^1 - a_N^{(1)}$, 
$\hat{X}^2_{\sigma_2} \defi X_{\sigma_2}^2 - a_N^{(2)}(\delta)$ and $\hat{X}^\delta_\s\defi \hat{X}_{\sigma_1}^1+\hat{X}^2_{\sigma_2}+X_{\s_1,\s_2}^\delta$. 
We will need the following straightforward asymptotics
\begin{equation} \bea \label{asymp}
& {a_N^{(1)}\over a_1 N} =  \beta_1 + O\left({ \log N\over N}\right), \quad \exp\left[- {{a_N^{(1)}}^2 \over 2 a_1 N }\right] = 2^{-N/2} \beta_1 \sqrt{2\pi a_1 N}(1+o(1)). \\
& {a_N^{(2)}(\delta)\over a_2 N(1+\de_N)} = \beta_2 (1+O(\de_N)), \\
& \hspace{2cm} \exp\left[- {{a_N^{(2)}}^2 (\delta)\over 2 a_2(1+\de_N) N }\right] = 2^{-N/2} \beta_2 \sqrt{2\pi a_2(1+\de_N) N}(1+o(1)).
\eea \end{equation}

\begin{lem} \label{localization}
Let $M$ be a compact set. For given $\epsilon>0$ there exists large enough compact $\tilde M$ such that
\[ \bea 
\prob\Big[\exists \s\in \Sigma_N, \text{such that}\,  \hat{X}^\delta_{\s} \in M,\,\text{but}\,
\hat{X}^{1}_{\s_1}\notin \tilde M\,\text{or}\,\hat{X}_{\s_2}^{(2)} +X_{\s_1, \s_2}^\de  \notin \tilde M  \Big] \leq \epsilon
\eea \] 
for large enough $N$.
\end{lem}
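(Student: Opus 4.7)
The plan is to decompose $\hat X^\delta_\s = U_\s + V_\s$, where $U_\s \defi \hat X^1_{\s_1}$ and $V_\s \defi \hat X^2_{\s_2} + X^\delta_{\s_1,\s_2}$ are independent Gaussians at fixed $\s$. Setting $M \subset [-L,L]$ and $\tilde M \defi [-L',L']$ for $L' > L$, the bad event decomposes into four sub-events $\{U_\s > L'\}$, $\{U_\s < -L'\}$, $\{V_\s > L'\}$, $\{V_\s < -L'\}$ (each intersected with $\{\hat X^\delta_\s \in M\}$), and the constraint $\hat X^\delta_\s \in M$ pairs them up: on this event $\{V_\s < -L'\}$ forces $U_\s > L' - L$, while $\{U_\s < -L'\}$ forces $V_\s > L' - L$. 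So it suffices to bound the two representative events $\{\exists\,\s_1 : \hat X^1_{\s_1} > L''\}$ and $\{\exists\,\s : \hat X^\delta_\s \in M,\ V_\s > L''\}$ with $L'' \in \{L', L'-L\}$, and choose $L'$ large at the end.

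For the first event, since $\hat X^1_{\s_1}$ depends only on $\s_1 \in \Sigma_{N/2}$, a union bound over $2^{N/2}$ indices combined with the first line of \eqref{asymp} yields $\prob[\hat X^1_{\s_1} > L''] \approx 2^{-N/2} e^{-\beta_1 L''}$, hence $\prob[\exists\,\s_1 : \hat X^1_{\s_1} > L''] \lesssim e^{-\beta_1 L''}$. The second event is more delicate, because a direct union bound over $\s \in \Sigma_N$ fails: $|\Sigma_N| = 2^N$, while the one-point tail $\prob[V_\s > L'']$ only gains a factor $2^{-N/2}$. Here one must exploit the joint constraint. Using independence of $U_\s$ and $V_\s$ at fixed $\s$,
\[
\prob[\exists\,\s : \hat X^\delta_\s \in M,\ V_\s > L''] \leq \sum_{\s} \int_{L''}^\infty f_{V_\s}(v)\, \prob[\hat X^1_{\s_1} \in M - v]\, dv.
\]
Standard Gaussian density estimates based on \eqref{asymp} give $f_{V_\s}(v) \lesssim \beta_2\, 2^{-N/2} e^{-v\beta_2(1+o(1))}$ and $\prob[\hat X^1_{\s_1} \in M - v] \lesssim 2L\beta_1\, 2^{-N/2} e^{v\beta_1(1+o(1))}$, so the integrand is of order $L\beta_1\beta_2 \cdot 2^{-N} e^{-v(\beta_2 - \beta_1)(1+o(1))}$. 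Summing over $\Sigma_N$ absorbs the $2^{-N}$, and integration from $L''$ to $\infty$ yields a bound $\lesssim L\, e^{-L''(\beta_2 - \beta_1)}$, finite precisely because $\beta_2 > \beta_1$ (equivalently $a_1 > a_2$).

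Combining the two estimates, the probability in the lemma is at most $C L \bigl(e^{-\beta_1(L'-L)} + e^{-(\beta_2-\beta_1)(L'-L)}\bigr)$, which can be made smaller than $\epsilon$ by choosing $L'$ large enough. The main obstacle is the density-times-tail estimate in the ``interior'' cases: neither marginal tail alone decays fast enough to absorb $|\Sigma_N|$, and only the convolution, together with the strict inequality $\beta_2 > \beta_1$ inherited from the assumption $a_1 > a_2$, produces an $N$-uniform bound. The perturbation $\delta_N = \delta\omega(N) \to 0$ enters only through $(1+o(1))$ corrections in the Gaussian density asymptotics, so the stronger speed in Assumption \ref{assump_speed} plays no role at this stage.
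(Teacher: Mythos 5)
Your proposal is correct and follows essentially the same route as the paper: a union bound over $2^{N/2}$ for the upper tail of $\hat{X}^1_{\s_1}$, and for the complementary case (first component atypically negative, equivalently $V_\s$ atypically large on $\{\hat{X}^\delta_\s\in M\}$) a $2^N$ union bound on the convolution of the two Gaussian densities, which converges precisely because $\beta_2-\beta_1>0$. Your explicit pairing of the four tail events via the constraint $\hat X^\delta_\s\in M$ is just a reorganization of the paper's final step (confining $\hat X^1_{\s_1}$ to $[-R,R]$ and deducing $V_\s\in M-[-R,R]$), and your closing remark that Assumption \ref{assump_speed} is not needed here is accurate.
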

\begin{proof}
We first claim that to $\varepsilon >0$ there exists $C>0$ such that 
\beq \label{bounded_one}
\prob\left[\exists \s_1\in\sigma_{N/2}:\; \hat{X}^{1}_{\s_1}\geq C \right] \leq \varepsilon.
\eeq
This is straightforward: the left side of the above expression is bounded by 
\[
2^{N/2}\prob\left[ \hat{X}^{1}_{\sigma_1} \geq C\right] \leq K e^{-\beta_1 C},
\]
where the second inequality follows from the asymptotics \eqref{asymp}. It thus suffices to choose $C$ large enough in the positive.\\

We now claim that to $ \epsilon>0$ there exists $R>0$ such that
\beq \bea \label{bounded_three} 
& \prob\Big[\exists \s\in \Sigma_N\, \text{such that}\, \hat{X}_\s^\de \in M,\, \text{but}\, \hat{X}_{\s_1}^{(1)} \notin [-R, R]\Big] \leq  \epsilon.
\eea \eeq
By \eqref{bounded_one}  we can find $\hat R$ large enough in the positive such that 
\beq \label{intermediate}
\prob\left[\exists\; \s_1: \, \hat{X}^{(1)}_{\s_1} \geq  \hat R \right] \leq \epsilon/2. 
\eeq
On the other hand, 
\beq \bea  \label{bounded_four}
& \prob\Big[\exists \s\in \Sigma_N:\, \hat{X}_\s^\de \in M, \hat{X}_{\s_1}^{(1)}\leq -\tilde R\Big]\leq 2^N \prob\Big[\hat{X}^\delta_\sigma \in M, \, \hat{X}_{\s_1}\leq - \tilde R  \Big] \\
& \leq   
2^N\ \E\left[ \int_{M- \hat{X}^1_{\s_1}} \exp\left[- \frac{(y+a_N^{(2)}(\delta))^2}{ 2 a_2 N(1+\de_N)} \right]\frac{dy}{\sqrt{2\pi a_2 N(1+\de_N)}}  ; \hat{X}^1_{\s_1}\leq -\tilde R\right].
\eea \eeq
Omitting the positive terms in the expansion of the quadratic polynomial we have 
\beq
\exp\left[ - {\left(y+a_N^{(2)}(\delta) \right)^2\over 2 a_2 N(1+\de_N)} \right]\leq  \exp\left[- {{a_N^{(2)}(\delta)}^2 \over 2 a_2(1+\de_N) N } - {a_N^{(2)}(\delta) \over 2a_2N(1+\de_N)}y \right]
\eeq
which, by the asymptotics \eqref{asymp}, is
\beq \bea 
\leq K 2^{-N/2} \sqrt{2\pi a_2 N(1+\de_N)} \exp[-\beta_2 y]\ .
\eea \eeq 
We thus obtain 
\beq 
\eqref{bounded_four} \leq K 2^{N/2} \ \E\left[ \exp\left(\beta_2 \hat{X}^1_{\s_1} \right);    \hat{X}^1_{\s_1} \leq - \tilde R \right]\int_M e^{-\beta_2 y} dy .
\eeq
It is straightforward to see that $\E\left[ \exp\left(\beta_2 \hat{X}^1_{\s_1} \right); \hat{X}^1_{\s_1} \leq - \tilde R \right] \leq K 2^{-N/2} e^{-(\beta_2 - \beta_1 ) \tilde R}$. Combining, we have 
\[
\prob\Big[\exists \s\in \Sigma_N:\, \hat{X}_\s^\de\in M, \hat{X}_{\s_1}^{(1)}\leq -\tilde R\Big] \leq K \exp\left(- (\beta_2 - \beta_1 ) \tilde R \right),
\]
and since $\beta_2 - \beta_1 > 0$, it suffices to choose $\tilde R$ large enough in the positive to make the above smaller than $\epsilon/2$: this then yields \eqref{bounded_three} with $R := \max(\hat R, \tilde R)$. \\

Now, $\hat{X}_\s \in M$ and $\hat{X}_{\s_1}^{(1)} \in [-R, R]$ implies that $\hat{X}_{\s_2}^{(2)} +X_{\s_1, \s_2}^\de \in M-[-R, R]$. The claim of the Lemma thus follows with $\tilde M$ chosen large enough to contain both $[-R, R]$ and $M-[-R, R]$. 
\end{proof}

The following Lemma provides the crucial piece of information pertaining the ultrametricity of the perturbed system. We emphasize that the statement is wrong if $\de= 0$, that is when the Hamiltonian is simply 
$X_\s = X_{\s_1}^1+X_{\s_2}^{2}$: in that case, coincidence of two configurations on the second spin does {\it not} imply also equality on the first.  
\begin{lem} \label{propa}
Let $M$ be a compact set and $\varepsilon>0$. Then
\[
\prob\left[\exists\, \s, \tau\in \Sigma_N: \s_1 \neq \tau_1,\, \s_2 = \tau_2\,\text{such that}\,  \hat{X}_\s^\de,\  \hat{X}_\tau^\de \in M \right] \leq \varepsilon,
\] 
for large enough $N$.
\end{lem}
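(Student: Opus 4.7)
The plan is a union-bound argument over the at most $2^{3N/2}$ pairs $(\s,\tau)$ with $\s_1\neq\tau_1$ and $\s_2=\tau_2$, combined with the localization provided by Lemma \ref{localization}. The crucial new ingredient (compared to the $\delta=0$ case, where the statement would be false) is that for such pairs the perturbations $X^\delta_{\s_1,\s_2}$ and $X^\delta_{\tau_1,\s_2}$ are \emph{independent} Gaussians of variance $Na_2\delta\omega(N)$. Forcing both $\hat X^\delta_\s$ and $\hat X^\delta_\tau$ into the compact set $M$ therefore confines these two independent Gaussians to short intervals, an event whose probability is small because $N\omega(N)\to\infty$.

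I would first invoke Lemma \ref{localization} to reduce the problem to the event $\{\hat X^1_{\s_1},\hat X^1_{\tau_1}\in[-R,R]\}$ for some large $R$. For a fixed pair, I would condition on $\hat X^1_{\s_1}=u$ and $\hat X^1_{\tau_1}=v$ and study the bivariate Gaussian $(W,Z):=(\hat X^2_{\s_2}+X^\delta_{\s_1,\s_2},\,\hat X^2_{\s_2}+X^\delta_{\tau_1,\s_2})$, centered at $(-a_N^{(2)}(\delta),-a_N^{(2)}(\delta))$. Diagonalizing via $S:=(W+Z)/2$ and $T:=(W-Z)/2$ yields two \emph{independent} Gaussians: $S$ has mean $-a_N^{(2)}(\delta)$ and variance $\sigma_S^2=Na_2(1+\delta\omega(N)/2)$, while $T$ is centered with variance $\sigma_T^2=Na_2\delta\omega(N)/2$. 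The event $\{W\in M-u,\,Z\in M-v\}$ forces both $S$ and $T$ into compact sets whose diameters depend only on $|M|$ and $R$.

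The $S$-estimate uses the asymptotics \eqref{asymp} at the ``designed'' variance $Na_2(1+\delta\omega(N))$, adjusting for the mismatch with $\sigma_S^2$: a Taylor expansion in $\delta\omega(N)$ of the ratio $(1+\delta\omega(N))/(1+\delta\omega(N)/2)$ yields $\prob[S\in\cdot]\le K\,2^{-N/2}\,\exp(-N\delta\omega(N)(\log 2)/4)(1+o(1))$. For $T$ the trivial density bound gives $\prob[T\in\cdot]\le K/\sqrt{N\omega(N)}$. Integrating against the densities of $\hat X^1_{\s_1}$ and $\hat X^1_{\tau_1}$ on $[-R,R]$, each contributing a factor $K\,2^{-N/2}$ by \eqref{asymp}, produces a per-pair bound of order $2^{-3N/2}/\sqrt{N\omega(N)}$; the union bound over $\le 2^{3N/2}$ pairs then gives the desired $O(1/\sqrt{N\omega(N)})\to 0$, which is smaller than $\varepsilon$ for $N$ large.

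The main obstacle I expect is the careful bookkeeping of the polynomial prefactors, together with the variance discrepancy between $Na_2(1+\delta\omega(N))$ (the variance with respect to which $a_N^{(2)}(\delta)$ is calibrated) and $\sigma_S^2=Na_2(1+\delta\omega(N)/2)$; one must verify that the Taylor correction behaves as claimed and that the accumulated $N^{O(1)}$ factors from \eqref{asymp} are absorbed. It is instructive that the argument collapses at $\delta=0$: then $W=Z$ identically, the joint event imposes only a one-dimensional constraint, and the conclusion indeed fails, consistent with the non-ultrametricity of the unperturbed system (Lemma \ref{lem unperturbed}).
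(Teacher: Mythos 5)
Your proof is correct, and it follows the paper's overall skeleton (localization via Lemma \ref{localization}, a union bound over the $2^{3N/2}$ pairs, and the independence of $X^\de_{\s_1,\s_2}$ and $X^\de_{\tau_1,\s_2}$ as the decisive ingredient), but the key Gaussian estimate is executed differently and, in fact, more sharply. The paper conditions on $X^2_{\s_2}$ and exploits the independence through the \emph{square} of the conditional probability $\prob[X^\de+\Delta a^{(2)}_N\in\tilde M-x]$, which it bounds by dropping the term $e^{-y^2/(2a_2N\de_N)}$ before integrating over all $x\in\R$; this discards the fact that the perturbation lives on the scale $\sqrt{N\de_N}\ll\sqrt N$ and costs a factor of order $1/\sqrt{\de_N}$ (the paper's final bound is $K\,e^{-N\de_N\log 2/4\,(1+o(1))}/(\de_N\sqrt N)$), which is precisely why Assumption \ref{assump_speed} ($N\omega(N)\geq\alpha\log N$, $\alpha>2/\log 2$) is invoked at the last step. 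Your exact diagonalization into the independent coordinates $S$ and $T$ keeps all the Gaussian mass bookkeeping: $T$ delivers the factor $K/\sqrt{N\de_N}$ and $S$ the factor $K2^{-N/2}e^{-N\de_N\log2/4\,(1+o(1))}\le K2^{-N/2}$, so the union bound closes under the weaker hypothesis $N\omega(N)\to\infty$ alone. (This is consistent with a matching first-moment lower bound, and it suggests the optimality belief in Remark \ref{explanations} is pessimistic for this particular event; in any case your argument certainly proves the lemma under Assumption \ref{assump_speed}.) Two small points of care, both of which you flag: the $(1+o(1))$ in your $S$-estimate must sit in the exponent, as $\exp(-\tfrac{N\de_N\log2}{4}(1+o(1)))$, since the corrections $O(\de_N\log N)$ and $O(N\de_N^2)$ are $o(N\de_N)$ but not necessarily $o(1)$ — harmless, because you only use that this factor is eventually $\le 1$; and the localization step is genuinely indispensable, since without restricting $\hat X^1_{\s_1},\hat X^1_{\tau_1}$ to a compact set the first moment of the number of bad pairs is exponentially large (dominated by large deviations of the first-level field), so a bare union bound would fail.
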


\begin{proof}
 By Lemma \ref{localization} we can find compact $\tilde M$ such that 
\[ \bea 
\prob\Big[\exists \s\in \Sigma_N, \text{such that}\,  \hat{X}_{\s}  \in M,\,\text{but}\,\hat{X}^{1}_{\s_1} \notin \tilde M\,\text{or}\,\hat{X}_{\s_2}^{(2)} +X_{\s_1, \s_2}^\de \notin \tilde M  \Big] \leq {\varepsilon\over 2}.
\eea \] 
Thus, 
\beq \bea \label{propagation}
& \prob\left[\exists\, \s, \tau\in \Sigma_N, \s_1 \neq \tau_1,\, \s_2 = \tau_2:  \hat{X}_\s^\de, {\hat X}_\tau^\de\in M \right]\leq {\varepsilon\over 2}+\\
&  \prob\Big[\mathop{\bigcup_{\s, \tau\in \Sigma_N}}_{\s_1 \neq \tau_1,\, \s_2 = \tau_2}\left\{
\hat{X}_{\s_1}^{1},\hat{X}_{\tau_1}^{1}\in \tilde M \text{ and } \hat{X}^2_{\s_2}+X_{\s_1, \s_2}^\de, \hat{X}^2_{\sigma_2}+X_{\tau_1, \s_2}^\de\in \tilde M\right\}\Bigg] \\
& \leq {\varepsilon\over 2} + 2^{3N/2}
\prob\left[\hat{X}^1_{\s_1}\in \tilde M\right]^2 \prob\left[\hat{X}^2_{\s_2}+X_{\s_1, \s_2}^\de \in\tilde M ,\hat{X}^2_{\s_2}+X_{\tau_1, \s_2}^\de \in\tilde M \right]
\eea \eeq 
since for $\s_1\neq \tau_1$ the random variables $\hat{X}_{\s_1}^{1}, \hat{X}_{\s_1, \s_2}^\de$ and 
$\hat{X}_{\tau_1}^{1}, \hat{X}_{\tau_1, \tau_2}^\de $ are independent. 

Now it is easily checked that
$$\prob\left[\hat{X}^1_{\s_1}\in \tilde M\right]^2\leq K 2^{-N}\ .$$
To prove the assertion we thus need to check that the second probability is of order $2^{-N/2}o(1)$.
We set 
$$ \Delta a^{(2)}_N:=a_{N}^{(2)}(\delta)-a_{N}^{(2)}(0)$$
where we omit the dependence in $\delta $ for simplicity.
And by expanding,
\[\bea 
\Delta a^{(2)}_N=-N\delta_N\left(\frac{\sqrt{a_2\log 2}}{2}+o(1)\right)\ .
\eea \]
We will need the asymptotics
\[\bea 
&\frac{\Delta a^{(2)}_N}{a_2 N\delta_N}=-\frac{\beta^2}{2}+o(1),\ \frac{{\Delta a^{(2)}_N}^2}{2a_2 N\delta_N}=N\delta_N\left(\frac{\log 2}{8} +o(1)\right)\ .
\eea \]
Therefore, for any $x\in\R$
\[\bea \label{eqn estimate}
 \prob\left[\hat{X}^2_{\s_2}+X_{\s_1, \s_2}^\de \in\tilde M - x\right]&=\int_{\tilde{M}-x}\exp\left[-\frac{(y-\Delta a^{(2)}_N)^2}{2\pi a_2N\delta_N}\right]\frac{dy}{\sqrt{2a_2N\delta_N}}\\
& \leq K\frac{e^{-N\delta_N\frac{\log 2}{8} }}{\sqrt{N\delta_N}}
\int_{\tilde{M}-x} \exp\left[-\left(\beta_2/2+o(1)\right)y \right] dy
\\
&= K\frac{e^{-N\delta_N\frac{\log 2}{8} }}{\sqrt{N\delta_N}} \exp\left[\left(\beta_2/2+o(1)\right)x\right]
\eea \]
where the last equality comes from a change of variable.

The second probability in \eqref{propagation} is for $\sigma_1\neq \tau_1$
\beq \bea \label{eqn prob square}
&\prob\left[\hat{X}^2_{\s_2}+X_{\s_1, \s_2}^\de \in\tilde M ,\hat{X}^2_{\s_2}+X_{\tau_1, \s_2}^\de \in\tilde M \right]=\\
& \int_\R \prob\left[X_{\s_1, \s_2}^\de + \Delta a^{(2)}_N \in\tilde M-x \right]^2
\exp\left[-\frac{(x+a_N^{(2)}(0))^2}{2a_2N}\right]\frac{dx}{\sqrt{2\pi a_2N}}\ .
\eea \eeq
The estimate \eqref{eqn estimate} together with the asymptotics for $a_N^{(2)}(0)$ yields the upper bound
\[\bea
& K\frac{e^{-N\delta_N\frac{\log 2}{4}}}{N\delta_N}\int_\R \exp\left[(\beta_2+o(1))x-\frac{(x+a_N^{(2)}(0))^2}{2a_2N}\right]\frac{dx}{\sqrt{2\pi a_2N}}\\
& \leq 2^{-N/2}K\frac{e^{-N\delta_N\frac{\log 2}{4}}}{\delta_N\sqrt{N}} \int_\R \exp\left[o(1)x\right]\frac{e^{-x^2/2a_2N}dx}{\sqrt{2\pi a_2N}}\\
& = 2^{-N/2} K\exp\left[-\frac{N\delta_N}{2}\left(\frac{\log 2}{2}+\frac{\log (N\delta^2_N)}{N\delta_N}+o(1)\right)\right] .
\eea \]
where the last equality follows by integration. It remains to prove that the exponential term tends to $0$. But this is so if $N\delta_N$ is at least of order $\alpha\log N$ for $\alpha>\frac{2}{\log 2}$  since
$$ \frac{\log (N\delta^2_N)}{N\delta_N} =-\frac{\log N}{N\delta_N}+\frac{2\log (N\delta_N)}{N\delta_N}= -\frac{\log N}{N\delta_N}+o(1)\ .$$
\end{proof}
\begin{rem} \label{explanations}
We stress that the above result essentially stands due to the fact that the perturbation introduces a square in the probability \eqref{eqn prob square}. It is quite remarkable that, even for such small perturbations, this alone is enough to make the probability of the event negligible. On the other hand, as long as the {\it extremal process} is concerned, we believe that our Assumption \ref{assump_speed} on the size of the perturbation is fairly optimal, in the sense that smaller perturbations ($o(log N)$, for $N\to \infty$) will presumably force the probability of such an event to stay macroscopic. 
\end{rem}

{\bf Sketch of the Proof of Theorem \ref{energy_levels}}. The content of Lemma \ref{propa} is that one cannot find two configurations $\s, \tau$ with shifted energy levels falling into a prescribed subset for which the overlap $q(\s, \tau) = a_2$: if coincidence on the second spin, then also automatically on the first, whence the two configurations must coincide. But this entails that the configurations falling into prescribed subsets have the same kind of dependencies (= hierarchical) as if they were coming from a two-levels GREM, and it is therefore not surprising that the mPP of the $\de$-perturbed Hamiltonian converges weakly to the one constructed outgoing from a GREM: it is explained in \cite{bokis_two} how this simple observation, together with the asymptotics used above, e.g.
\[\bea 
& \prob\left[\hat{X}_{1} \in M_1, \hat{X}_{1, 1}^\de \in M_2 \right] = 2^{-N}(1+o(1)) \int_{M_1} e^{-\beta_1 y} dy \int_{M_2} e^{-\beta_2 y} dy,
\eea \]
which holds for any compacts $M_1, M_2 \subset \R$, allows to prove that the mPP $\mathcal X^\delta_N$ converges weakly to $\mathcal X$. We will not reproduce the proof here: it is a natural modification of what is known as the Chen-Stein method \cite{barbour} to prove Poisson Approximation.  \\

{\bf Sketch of the proof of Corollary \ref{full_parisi_picture}.} This is rather straightforward. One first proves convergence of 
of the ''image'' of the process $\mathcal X_N$ under the mapping $s\mapsto \exp(\beta s)$. This is very standard: the upshot is that 
\[
\mathcal Y_N \defi \sum \de_{\exp\beta\hat{X}^\de_\s , \ \exp\beta\hat{X}^\de_\tau , \ q(\s, \tau)}
\]
converges weakly towards $\mathcal Y$. \\

Having proved this, it suffices to prove that the normalization  
\[ 
\exp\beta\hat{X}_\s^\de  \mapsto { \exp\hat{X}_\s^\de \over \sum_\tau \exp\beta\hat{X}_\tau}
\]
commutes with the limit $N\to \infty$ to obtain that 
\[
\mathcal Z_N \defi  \sum \de_{ \overline{\exp\beta\hat{X}^\de_\s}, \overline{\exp\beta\hat{X}^\de_\tau },\; q(\s, \tau)},  \qquad \overline{\exp\beta\hat{X}^\de_\s} \defi {\exp\beta\hat{X}^\de_\s \over \sum_\tau \exp\beta\hat{X}^\de_\tau } 
\]
(this is nothing but $\mathcal G_{\beta, \de, N}(\s)$) converges to $\mathcal Z$. Again, this is very standard, and we refer the 
reader to \cite[pp. 34-35]{bokis_two} for the proof that the two operations commute. \\

The Corollary \ref{full_parisi_picture} then follows by the remarkable properties of the Derrida-Ruelle cascades and the coalescent. We refer the reader to \cite[Section 9.2, pp. 93-94]{boszni} for the heuristics, which in particular clarifies how the special form of the intensity of the Point Processes $t\mapsto \exp(- \beta_i t), i=1, 2$, plays a crucial r\^ole. 

$\hfill \square$

\appendix 

 \section{EGGI for General Perturbed Systems}
In this section, we outline the method of perturbation by expansion in $p$-powers. This is done in such a way to leave the free energy unchanged and retrieve the extended Ghirlanda-Guerra identities for almost all values of the parameters. We follow closely the treatment of the $p$-spin model in \cite{talagrand}. The interest of such a method is that, coupled to Panchenko's theorem, it provides a way to prove the Parisi picture for the perturbed Gibbs measure of a fairly wide class of Hamiltonians (see Proposition \ref{prop full general} below).
 
Let $X=(X_\sigma)_{\sigma\in\Sigma_N}$ be a spin glass Hamiltonian with covariance $N\{q_{\sigma\sigma'}\}$ as in the general setting of Section 2. Consider $(\beta_p)_{p\geq1}$ with $\beta_p>0$ and $\sum_{p\geq1}\beta^2_p<\infty$. We write $\vec{\beta}$ for a vector $(\beta_1,\beta_2,\beta_3,...)$. It is convenient to assume that for all $p$: $\beta_p\leq \beta_1$.  We say that a property holds for {\it almost all in $\vec{\beta}$} for the measure given by the product of the Lebesgue measures on $[0,\beta_1]$. The perturbed Hamiltonian is 
\be \label{eqn p expansion}
  \beta_1X_\sigma +\sqrt{\delta_N}\sum_{p>1}\beta_p X_\sigma^{p}
\ee
where $(X^{p}_\s)$ are centered Gaussians with covariance $N\{q_{\s\s'}^p\}$ independent for distinct $p$ and $X$. We shall need that $\delta_N\to 0$ and $N\delta_N^{1/8}\to\infty$. Therefore $N\delta_N$ must grow faster than $N^{7/8}$, a condition much stronger than $\log N$. The application of Panchenko's theorem proven here is:
\begin{prop}[Full Parisi Picture]
\label{prop full general}
Suppose that the number of values taken by the overlaps $\{q_{\sigma\sigma'}\}$ is uniformly bounded in $N$. Then for almost all $\vec{\beta}$, the limit points of $(\mathcal{G}_{\vec{\beta},N})_N$ are Derrida-Ruelle cascades. 
\end{prop}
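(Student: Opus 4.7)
The plan is to adapt Talagrand's derivation of the extended Ghirlanda--Guerra identities for the $p$-spin model (cf.~\cite{talagrand}) to the present $p$-power expansion, and then invoke Panchenko's Theorem \ref{panchy_one} to upgrade EGGI plus finite overlap support to a Derrida--Ruelle cascade. As a preliminary, the additional perturbation contributes at most $\delta_N\sum_{p>1}\beta_p^2$ to the variance of $N^{-1}\log Z$, which vanishes since $\delta_N\to 0$ and $\sum_p\beta_p^2<\infty$; thus exactly as in Lemma 2.4 one obtains $f_{\vec\beta}(\beta)=f(\beta_1)$ together with self-averaging from the concentration of the Lipschitz functional $\log Z$.

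\textbf{Extracting EGGI from the $p$-power expansion.} Fix $p\geq 2$, an integer $s$, and a bounded measurable $f(\{q_{ij}\}_{i,j\leq s})$. Define
\[
\Phi_p(\beta_p)\defi \E\mathcal{G}^{\otimes(s+1)}_{\vec\beta,N}\!\left(f\, q^p_{\s_1\s_{s+1}}\right) - \tfrac{1}{s}\E\mathcal{G}^{\otimes s}_{\vec\beta,N}(f)\,\E\mathcal{G}^{\otimes 2}_{\vec\beta,N}\!\left(q^p_{\s_1\s_2}\right) - \tfrac{1}{s}\sum_{l=2}^{s}\E\mathcal{G}^{\otimes s}_{\vec\beta,N}\!\left(f\, q^p_{\s_1\s_l}\right).
\]
Differentiating $\log Z_{\vec\beta,N}$ in $\beta_p$ produces $\sqrt{\delta_N}\sum_\sigma \mathcal{G}_{\vec\beta,N}(\sigma)X^p_\sigma$; Gaussian integration by parts in the $X^p$-field then rewrites $\Phi_p$ in terms of the fluctuations of $\partial_{\beta_p}\log Z$, which are controlled by the concentration of measure. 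Integrating $|\Phi_p|$ over $\beta_p\in[0,\beta_1]$ and applying Chebyshev, the assumption $N\delta_N^{1/8}\to\infty$ is precisely what is required so that the resulting bound tends to zero. By Fubini, for Lebesgue-almost every $\beta_p\in[0,\beta_1]$, $\Phi_p(\beta_p)\to 0$ along every subsequential limit as $N\to\infty$.

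\textbf{Density, Panchenko, and the cascade structure.} Iterating this over all $p\geq 2$, for almost all $\vec\beta$ the EGGI holds simultaneously for every $g(q)=q^p$. Linearity then extends the identity to polynomial $g$, Stone--Weierstrass on the compact overlap range $[-1,1]$ to continuous $g$, and (since by hypothesis the overlaps take uniformly finitely many values) to all bounded measurable $g$; hence every limit point $\mathcal{G}_{\vec\beta}$ satisfies EGGI. Theorem \ref{panchy_one} then gives ultrametricity almost surely, and EGGI combined with ultrametricity and finite overlap support characterizes the limiting law as a Derrida--Ruelle cascade, cf.~\cite{bk3,panchenko}. The main technical obstacle is the middle paragraph: balancing the $\sqrt{\delta_N}$-prefactor of the perturbation against the concentration rate of $\log Z$ and the Lebesgue integration in $\beta_p$ is what calibrates the restrictive condition $N\delta_N^{1/8}\to\infty$; the remaining steps are either standard Gaussian computations or invocations of the deep results already quoted in the main text.
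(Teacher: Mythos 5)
Your proposal is correct and follows essentially the same route as the paper: derive EGGI for the perturbed system by combining the $\vec\beta$-averaged self-averaging of the internal energy $X^p_\sigma/N$ (concentration plus convexity, exactly the content of Talagrand's Theorem 2.12.1 that the paper invokes) with Gaussian integration by parts, pass from $g(q)=q^p$ to general bounded $g$ by polynomial approximation, and then apply Theorem \ref{panchy_one} together with the fact that EGGI plus ultrametricity on a finite overlap set forces the Derrida--Ruelle cascade structure. The only cosmetic difference is that you phrase the quantitative input as concentration of $\partial_{\beta_p}\log Z$ after averaging in $\beta_p$, whereas the paper cites the convexity-plus-concentration bound $K/(N^{1/4}\delta_N^{2})$ directly; the calibration of $\delta_N$ plays the same role in both.
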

By Panchenko's theorem, the proof reduces to show EGGI. 
  \begin{lem}
 For almost all $\vec{\beta}$, the limit points of $(\mathcal{G}_{\vec{\beta},N})_N$ satisfy EGGI.
  \end{lem}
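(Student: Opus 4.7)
The plan is to follow the standard Ghirlanda--Guerra--Talagrand strategy adapted to the $p$-power expansion (as hinted by the opening of the appendix and \cite{talagrand}): Gaussian integration by parts reduces EGGI to an asymptotic factorization, which is then extracted from the Gaussian concentration of $F_N$ combined with its convexity in each parameter $\beta_p$. Since the overlap takes only finitely many values (hypothesis of Proposition \ref{prop full general}), Stone--Weierstrass extends the identities from the polynomials $g(q)=q^p$ to all bounded measurable $g$. The argument runs for each $p>1$ separately, and the full-measure set of $\vec{\beta}$ is obtained as a countable intersection.

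The first step is the IBP reduction. Applying Gaussian integration by parts to the field $\sqrt{\delta_N}\beta_p X^p_\sigma$, whose covariance is $\delta_N\beta_p^2 N q^p_{\sigma\sigma'}$, on a bounded observable $f=f(\{q_{ij}\}_{i,j\leq s})$ yields
\[
\E\bigl\langle\sqrt{\delta_N}\beta_p X^p_{\sigma_1}\, f\bigr\rangle = \delta_N\beta_p^2\, N\Bigl[\E\langle f\rangle + \sum_{l=2}^{s}\E\langle q_{1l}^p f\rangle - s\,\E\langle q_{1,s+1}^p f\rangle\Bigr],
\]
and the specialization $s=1$, $f\equiv 1$ produces $\E\langle X^p_{\sigma_1}\rangle = \sqrt{\delta_N}\beta_p N(1-\E\langle q_{12}^p\rangle)$. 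Combining the two, EGGI with $g(q)=q^p$ is equivalent to the asymptotic factorization
\[
\Delta^{(p)}_N \defi \frac{\bigl|\E\langle X^p_{\sigma_1} f\rangle - \E\langle X^p_{\sigma_1}\rangle\E\langle f\rangle\bigr|}{\sqrt{\delta_N}\,\beta_p\, N} \xrightarrow[N\to\infty]{} 0.
\]

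The heart of the proof is this factorization. Decompose
\[
\E\langle X^p_{\sigma_1} f\rangle - \E\langle X^p_{\sigma_1}\rangle\E\langle f\rangle = \E\bigl[\mathrm{Cov}_\mu(X^p_{\sigma_1}, \bar f)\bigr] + \mathrm{Cov}\bigl(\langle X^p_{\sigma_1}\rangle, \langle f\rangle\bigr),
\]
where $\bar f(\sigma_1)$ denotes the partial Gibbs average of $f$ over $\sigma_2,\ldots,\sigma_s$. Cauchy--Schwarz bounds the first piece by $\|f\|_\infty\,\E\sqrt{\mathrm{Var}_\mu(X^p_{\sigma_1})}$ and the second by $\|f\|_\infty\sqrt{\mathrm{Var}(\langle X^p_{\sigma_1}\rangle)}$. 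Both are controlled via the identities $\partial_{\beta_p}F_N=(\sqrt{\delta_N}/N)\langle X^p_{\sigma_1}\rangle$ and $\partial^2_{\beta_p}F_N=(\delta_N/N)\mathrm{Var}_\mu(X^p_{\sigma_1})$. The global bound $\int_0^{\beta_1}\E\partial^2_{\beta_p}F_N\,d\beta_p\leq C\delta_N$, which follows from the IBP formula for $\partial_{\beta_p}\E F_N$ evaluated at the endpoints, handles the Gibbs-variance piece after averaging in $\beta_p$. For the disorder piece, combine Gaussian concentration $\mathrm{Var}(F_N)\leq C/N$ (Theorem 2.2.4 in \cite{talagrand}) with the convexity of $F_N$ in $\beta_p$ via the standard sandwich
\[
\bigl|\partial_{\beta_p}F_N - \partial_{\beta_p}\E F_N\bigr| \leq \tfrac{1}{h}\sup_{|\xi-\beta_p|\leq h}\bigl|F_N(\xi)-\E F_N(\xi)\bigr| + h\sup_{|\xi-\beta_p|\leq h}\partial^2_\xi\E F_N(\xi),
\]
valid for every $h>0$; integrating in $\beta_p\in[0,\beta_1]$ and optimizing $h$ produces an $L^1_{\beta_p}$ bound on $\langle X^p_{\sigma_1}\rangle - \E\langle X^p_{\sigma_1}\rangle$. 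Packaging both bounds yields $\int_0^{\beta_1}\Delta^{(p)}_N\,d\beta_p\to 0$ under the hypothesis $N\delta_N^{1/8}\to\infty$.

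The final step is routine: Fubini upgrades the integrated estimate to $\Delta^{(p)}_N\to 0$ along a subsequence for Lebesgue-a.e. $\beta_p$, and intersecting the resulting full-measure sets over the countable family $p>1$ produces a full-measure set in $\vec{\beta}$ on which EGGI holds with $g(q)=q^p$ for every $p$. Stone--Weierstrass extends this to all continuous $g$ on $[-1,1]$, and bounded convergence then handles arbitrary bounded measurable $g$. The main obstacle is the quantitative trade-off in the concentration step: the $\sqrt{\delta_N}$ scaling of the perturbation weakens our grip on the fluctuations of $\langle X^p_{\sigma_1}\rangle$, and balancing the two integrated error contributions (from the Gibbs-variance piece and the disorder-covariance piece) is precisely what forces the smallness condition $N\delta_N^{1/8}\to\infty$ stated in the hypothesis.
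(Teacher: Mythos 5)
Your proposal is correct and follows essentially the same route as the paper: Gaussian integration by parts reduces EGGI with $g(q)=q^p$ to an asymptotic factorization, which is obtained from the $\vec{\beta}$-averaged self-averaging of $\langle X^p_\sigma\rangle/N$, and polynomial approximation plus a countable intersection over $p$ finishes the argument. The only difference is that where the paper simply cites Talagrand's Theorem 2.12.1 for the concentration step, you re-derive it via the thermal/disorder covariance split and the convexity--concentration sandwich --- which is precisely the standard proof of that cited theorem, so the approaches coincide.
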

The first ingredient is the self-averaging of the internal energy in $\beta$-average coming from convexity and concentration of measure.
 \begin{thm}[Theorem 2.12.1 in \cite{talagrand}]
 In the setting of \eqref{eqn p expansion}, one has for every $p>1$
 $$ \int_{[0,\beta_1]^\N} \E \mathcal{G}_{\vec{\beta},N}\Big(\Big|{X^p_\sigma}/N-\E \mathcal{G}_{\vec{\beta},N}\big({X^p_\sigma}/N\big) \Big| \Big)d\vec{\beta}\leq \frac{K}{N^{1/4}\delta_N^2}$$
 for some constant $K$ independent of $N$ and $p$. For $p=1$, the above holds without $\delta_N$.
 \end{thm}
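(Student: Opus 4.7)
The plan is to realize the Gibbs average $\mathcal{G}_{\vec\beta,N}(X^p_\sigma/N)$ as a partial derivative of the normalized log-partition function $F_N(\vec\beta)\defi N^{-1}\log Z_N(\vec\beta)$ and then transfer self-averaging of $F_N$ into self-averaging of that derivative. A direct computation yields
$$
\partial_{\beta_p}F_N=\sqrt{\delta_N}\,\mathcal{G}_{\vec\beta,N}(X^p_\sigma/N),\qquad \partial^2_{\beta_p}F_N=N\delta_N\,\mathrm{Var}_{\mathcal{G}_{\vec\beta,N}}(X^p_\sigma/N)\geq 0,
$$
so $F_N$ is convex in $\beta_p$. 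I would split the target by the triangle inequality
$$
\mathcal{G}\bigl(|X^p_\sigma/N-\E\mathcal{G}(X^p_\sigma/N)|\bigr)\leq \mathcal{G}\bigl(|X^p_\sigma/N-\mathcal{G}(X^p_\sigma/N)|\bigr)+\bigl|\mathcal{G}(X^p_\sigma/N)-\E\mathcal{G}(X^p_\sigma/N)\bigr|
$$
and bound the two summands separately after integrating in $\beta_p$.

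For the disorder-only second summand, Gaussian concentration applied to $\vec\beta\mapsto F_N$ (which is Lipschitz of constant $O(N^{-1/2})$ in the underlying i.i.d.\ standard Gaussians, uniformly for $\vec\beta\in[0,\beta_1]^{\N}$) yields $\E|F_N-\E F_N|^2\leq K/N$. The classical convexity smoothing lemma then upgrades this to $L^1([0,\beta_1])$-control of the derivative fluctuation: sandwiching $g'(\beta)$ between the forward and backward divided differences $\varepsilon^{-1}(g(\beta\pm\varepsilon)-g(\beta))$, subtracting the analogous sandwich for $h=\E g$, integrating in $\beta_p$ so that the ``slack'' terms $(\E g)'(\beta\pm\varepsilon)-(\E g)'(\beta)$ telescope to an $O(\varepsilon)$ boundary contribution, taking expectations, and optimizing in $\varepsilon$ yields
$$
\int_0^{\beta_1}\E\bigl|\sqrt{\delta_N}\bigl(\mathcal{G}(X^p_\sigma/N)-\E\mathcal{G}(X^p_\sigma/N)\bigr)\bigr|\,d\beta_p\leq \frac{K}{N^{1/4}}.
$$
Dividing through by $\sqrt{\delta_N}$ puts this well below the target bound $K/(N^{1/4}\delta_N^2)$.

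For the Gibbs-variance first summand, Cauchy--Schwarz inside $\mathcal{G}$ gives $\mathcal{G}(|X^p_\sigma/N-\mathcal{G}(X^p_\sigma/N)|)\leq \sqrt{\mathrm{Var}_{\mathcal{G}}(X^p_\sigma/N)}=(N\delta_N)^{-1/2}\sqrt{\partial^2_{\beta_p}F_N}$, and a second Cauchy--Schwarz in $\beta_p$ reduces the $\beta_p$-integral to $\sqrt{\beta_1}$ times the square root of $\int_0^{\beta_1}\partial^2_{\beta_p}F_N\,d\beta_p$. The latter telescopes by the fundamental theorem to the increment of $\partial_{\beta_p}F_N$ on $[0,\beta_1]$, whose expected absolute value is $O(\sqrt{\delta_N/N})$ since $\E(X^p_\sigma/N)^2=1/N$; this summand is therefore $O(N^{-3/4}\delta_N^{-1/4})$, dominated by the first. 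Fubini handles the spectator coordinates $\vec\beta_{-p}\in[0,\beta_1]^{\N\setminus\{p\}}$ since all bounds are uniform in them. The main obstacle, and the source of the $N^{-1/4}$ exponent in the conclusion, is the convexity smoothing lemma: the square-root-to-fourth-root loss in passing from $L^2$-concentration of values to $L^1$-in-$\beta_p$ concentration of derivatives is exactly what degrades $N^{-1/2}$ to $N^{-1/4}$.
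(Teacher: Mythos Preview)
The paper does not prove this statement: it is quoted as Theorem~2.12.1 of Talagrand's book and used as a black box in the appendix. There is therefore no in-paper proof to compare against.

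That said, your sketch is the standard route (and essentially Talagrand's): identify $\sqrt{\delta_N}\,\mathcal G_{\vec\beta,N}(X^p_\sigma/N)=\partial_{\beta_p}F_N$, use convexity in $\beta_p$, combine Gaussian concentration $\E|F_N-\E F_N|\leq K N^{-1/2}$ with the divided-difference sandwich, integrate in $\beta_p$ so the slack $(\E F_N)'(\beta_p\pm\varepsilon)-(\E F_N)'(\beta_p)$ telescopes to a boundary term of size $O(\varepsilon)$, and optimize $\varepsilon\sim N^{-1/4}$. The Gibbs-variance piece via $\partial^2_{\beta_p}F_N$ and Cauchy--Schwarz is handled correctly and is indeed subdominant. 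Two small points worth making explicit if you write this out in full: (i) the telescoped boundary term requires a uniform bound on $\partial_{\beta_p}\E F_N$ slightly outside $[0,\beta_1]$, which follows from Gaussian integration by parts giving $\partial_{\beta_p}\E F_N=\delta_N\beta_p\,\E\mathcal G^{\otimes2}(1-q^p)$; (ii) the product measure on $[0,\beta_1]^{\N}$ should be read as normalized Lebesgue in each coordinate so that Fubini over the spectator $\vec\beta_{-p}$ is legitimate and the right-hand side is a finite constant. Your bound $K/(N^{1/4}\delta_N^{1/2})$ on the disorder piece is actually sharper than the stated $K/(N^{1/4}\delta_N^{2})$; the looser exponent in the statement is harmless for the intended application, where one only needs $N^{-1/4}\delta_N^{-2}\to 0$.
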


We denote by $\mathcal{G}_{\vec\beta}$ a generic limit point of $(\mathcal{G}_{\vec{\beta},N})$. We write $f_s(q)$ for any bounded measurable function of the overlaps of $s$ copies. The above theorem is applied directly to prove the factorization essential to the proof of the EGGI. Namely, 
if $N^2\delta^{1/4}\to 0$ as $N\to\infty$, then for every $p$ and almost all $\vec{\beta}$
 $$  \lim_{N\to\infty}\E\mathcal{G}_{\vec{\beta},N}^{\otimes s}\Big(\frac{X^p_{\s_1}\ f_s(q)}{\delta_NN}\Big)=
 \beta_p^2  \ \E\mathcal{G}_{\vec{\beta}}^{\otimes 2}\Big(1-q^p_{\sigma_1\sigma_2}\Big)
  \E\mathcal{G}_{\vec{\beta}}^{\otimes s}\Big(f_s(q)\Big)\ .
 $$
 On the other hand, standard Gaussian integration by parts yields for every $N$
 $$
  \E\mathcal{G}^{\otimes s}_{\vec{\beta},N}\Big(\frac{X^p_{\sigma_1}\ f_s(q)}{\delta_NN}\Big)=
 \beta_p^2\left(\sum_{l=1}^s\E\mathcal{G}^{\otimes s}_{\vec{\beta},N}\Big(q^p_{\sigma_1\sigma_l}\ f_s(q)\Big)-s \ \E\mathcal{G}^{\otimes s+1}_{\vec{\beta},N}\Big(q^p_{\sigma_1\sigma_{s+1}}\ f_s(q)\Big)\right) \ .
 $$

 By combining the two last equations, one gets an approximation of  any bounded measurable function $g$ by approximating with polynomials, thereby retrieving EGGI and proving the proposition. \\
${}$ \hfill $\square$ \\
 
{\large \bf Acknowledgments.} The idea of small perturbations of non-irreducible models was mentioned by Erwin Bolthausen (apparently after  a discussion with Silvio Franz) to the second named author long ago. We thank Bolthausen and Franz for sharing their insights.

\end{document}